\definecolor{darkred}{rgb}{0.8,0,0} % Link color
\g@addto@macro \normalsize {%
 \setlength\abovedisplayskip{4pt plus 2pt minus 2pt}%
 \setlength\belowdisplayskip{4pt plus 2pt minus 2pt}%
}
\DeclareRobustCommand{\VAN}[2]{#2}
\begin{document}
\global\long\def\seq{\boldsymbol{x_{1}},\boldsymbol{x_{2}},...}

\setstretch{1}
\title{Convergence, Continuity and Recurrence in\\ Dynamic Epistemic Logic}
\author{Dominik Klein\inst{1}
and Rasmus K. Rendsvig\inst{2}
\institute{Department of Philosophy, Bayreuth University, and \\Department of Political Science, University of Bamberg \\ \email{dominik.klein@uni-bayreuth.de} \and Theoretical Philosophy, Lund University,  and \\ Center for Information and Bubble Studies, University of Copenhagen \\ \email{rendsvig@gmail.com}}}
\maketitle
\vspace{-10pt}
\begin{abstract}
The paper analyzes dynamic epistemic logic from a topological perspective.
The main contribution consists of a framework in which dynamic epistemic
logic satisfies the requirements for being a topological dynamical
system thus interfacing discrete dynamic logics with continuous mappings
of dynamical systems. The setting is based on a notion of logical
convergence, demonstratively equivalent with convergence in Stone
topology. Presented is a flexible, parametrized family of metrics
inducing the latter, used as an analytical aid. We show maps induced
by action model transformations continuous with respect to the Stone
topology and present results on the recurrent behavior of said maps. 

\keywords{dynamic epistemic logic, limit behavior, convergence, recurrence, dynamical systems,  metric spaces, general topology, modal logic}%

%For future online uploads:
\lstset{%
language=[LaTeX]TeX,
backgroundcolor=\color{gray!25},
basicstyle=\ttfamily,
breaklines=true,
columns=fullflexible,
basewidth=0.5em,
breakindent=0pt}
\begin{textblock*}{0.35\textwidth}(160mm,50mm) 
\setstretch{.5} {
\scriptsize{} 
%\noindent
%\href{http://goo.gl/925L44}{\texttt{Link to published version}}\\
\noindent\texttt{Copy-Pastable BibTeX:}
}
\end{textblock*}
\begin{textblock*}{0.42\textwidth}(152mm,52mm)
\setstretch{.5} {
\scriptsize{} 
%\vspace{-6pt}
{\tiny{}\begin{lstlisting}[breaklines]
@inproceedings{KleinRendsvigLORI2017,
   author = {Klein, Dominik and Rendsvig, Rasmus K.},
    title = {{Convergence, Continuity and Recurrence
              in Dynamic Epistemic Logic}},
   editor = {Baltag, Alexandru and Seligman, Jeremy
             and Yamada, Tomoyuki}, 
booktitle = {Logic, Rationality, and Interaction
            (LORI 2017, Sapporo, Japan)},
   series = {LNCS},
     year = {2017},
publisher = {Springer}
}
\end{lstlisting}
%    pages = {xxx-xxx},
}
\setstretch{1} }
\end{textblock*}
\end{abstract}

\section{Introduction}

Dynamic epistemic logic is a framework for modeling information dynamics.
In it, systematic change of Kripke models are punctiliously investigated
through model transformers mapping Kripke models to Kripke models.%
\begin{comment}
\footnote{These include public announcements, product updates or cautious belief
updates to name but a few.}
\end{comment}
{} The iterated application of such a map may constitute a model of
information dynamics, or be may be analyzed purely for its mathematical
properties \cite{Baltag2009,Benthem_OneLonely,Benthem_etal_2009,Benthem2011b,Benthem-DS-2016,Bolander2015,Degremont2010,Rendsvig_BS,Rendsvig2014,Rendsvig-DS-DEL-2015,Sadzik2006}.

Dynamical systems theory is a mathematical field studying the long-term
behavior of spaces under the action of a continuous function. In case
of discrete time, this amounts to investigating the space under the
iterations of a continuous map. The field is rich in concepts, methodologies
and results developed with the aim of understanding general dynamics.

The two fields find common ground in the iterated application of maps.
With dynamic epistemic logic analyzing very specific map types, the
hope is that general results from dynamical systems theory may shed
light on properties of the former. There is, however, a chasm between
the two: Dynamical systems theory revolves around spaces imbued with
metrical or topological structure with respect to which maps are continuous.
No such structure is found in dynamic epistemic logic.  This chasm has not gone unappreciated: In his 2011 \emph{Logical
Dynamics of Information and Interaction} \cite{Benthem2011b}, van
Benthem writes
\begin{quote}
\textbf{\small{}From discrete dynamic logics to continuous dynamical
systems}{\small{} }\linebreak{}
{\small{}``We conclude with what we see as a major challenge. Van
Benthem \cite{Benthem2001,Benthem_OneLonely} pointed out how update
evolution suggests a long-term perspective that is like the evolutionary
dynamics found in dynamical systems. {[}...{]}}%
\begin{comment}
{\small{}Sarenac \cite{Sarenac2011} makes a strong plea for carrying
Logical Dynamics into the latter realm of emergent system behaviour,
pointing at the continuous nature of real events in time \textendash{}
and also in space. Technically, this would turn the DEL recursion
axioms into differential equations, linking the discrete mathematics
of DEL and epistemic temporal logic with classical continuous mathematics.}
\end{comment}
{\small{} Interfacing current dynamic and temporal logics with the
continuous realm is a major issue, also for logic in general.'' ~~~~~~~~~~~~~~~~~~~~~~~~~~~
\cite[Sec. 4.8. Emph. is org. heading]{Benthem2011b}}{\small \par}
\end{quote}
This paper takes on the challenge and attempts to bridge this chasm.\medskip{}

We proceed as follows. Section \ref{sec:Modal-Spaces} presents what
we consider natural spaces when working with modal logic, namely sets
of pointed Kripke models \emph{modulo} logical equivalence. These
are referred to as \emph{modal spaces}. A natural notion of ``logical
convergence'' on modal spaces is provided. Section \ref{sec:Topologies}
seeks a topology on modal spaces for which topological convergence
coincides with logical convergence. We consider a metric topology
based on $n$-bisimulation and prove it insufficient, but show an
adapted Stone topology satisfactory. Saddened by the loss of a useful
aid, the metric inducing the $n$-bisimulation topology, a family
of metrics is introduced that all induce the Stone topology, yet allow
a variety of subtle modelling choices. Sets of pointed Kripke models
are thus equipped with a structure of compact metric spaces. Section
\ref{sec:Clean-Maps} considers maps on modal spaces based on multi-pointed
action models using product update. Restrictions are imposed to ensure
totality, and the resulting \emph{clean maps} are shown continuous
with respect to the Stone topology. With that, we present our main
contribution: A modal space under the action of a clean map satisfies
the standard requirements for being a topological dynamical system.
Section \ref{sec:Recurrent} applies the now-suited terminology from
dynamical systems theory, and present some initial results pertaining
to the recurrent behavior of clean maps on modal spaces. Section \ref{sec:Concl}
concludes the paper by pointing out a variety of future research venues.
Throughout, we situate our work in the literature.
\begin{remark}
To make explicit what may be apparent, note that the primary concern
is the \emph{semantics} of dynamic epistemic logic, i.e., its models
and model transformation. Syntactical considerations are briefly touched
upon in Section~\ref{sec:Concl}.
\end{remark}

\begin{remark}
The paper is not self-contained. For notions from modal logic that
remain undefined here, refer to e.g. \cite{BlueModalLogic2001,ModelTheoryModalLogic}.
For topological notions, refer to e.g. \cite{Munkres}. For more on
dynamic and epistemic logic than the bare minimum of standard notions
and notations rehearsed, see e.g. \cite{BaltagMoss2004,BaltagBMS_1998,BaltagRenneReview,Baltag_Smets_2008,Benthem2011b,Ditmarsch2008,Hintikka1962,Fagin_etal_1995,Plaza1989,Rendsvig-MA-2011}.
Finally, a background document containing generalizations and omitted
proofs is our \cite{BigTech}.
\end{remark}

\section{Modal Spaces and Logical Convergence\label{sec:Modal-Spaces}}

Let there be given a countable set $\Phi$ of \textbf{atoms} and a
finite set $I$ of \textbf{agents}. Where $p\in\Phi$ and $i\in I$,
define the \textbf{language} $\mathcal{L}$ by
\begin{center}
$\varphi:=\top\;|\;p\;|\;\neg\varphi\;|\;\varphi\wedge\varphi\;|\;\square_{i}\varphi$.
\par\end{center}

Modal logics may be formulated in $\mathcal{L}$. By a \textbf{logic
}$\Lambda$ we refer only to extensions of the \textbf{minimal normal
modal logic} $K$ over the language $\mathcal{L}$. With $\Lambda$
given by context, let $\boldsymbol{\varphi}$ be the set of formulas
$\Lambda$-provably equivalent to $\varphi$. Denote the resulting
partition $\{\boldsymbol{\varphi}:\varphi\in\mathcal{L}\}$ of $\mathcal{L}$
by $\boldsymbol{\mathcal{L}}_{\Lambda}$.\footnote{$\boldsymbol{\mathcal{L}_{\Lambda}}$ is isomorphic to the domain
of the \textbf{Lindenbaum algebra} of $\Lambda$.} Call $\boldsymbol{\mathcal{L}}_{\Lambda}$'s elements $\Lambda$\textbf{-propositions}.

We use relational semantics to evaluate formulas. A \textbf{Kripke
model} for $\mathcal{L}$ is a tuple $M=(\left\llbracket M\right\rrbracket ,R,\left\llbracket \cdot\right\rrbracket )$
where $\left\llbracket M\right\rrbracket $ is a countable, non-empty
set of \textbf{states}, $R:I\longrightarrow\mathcal{P}(\left\llbracket M\right\rrbracket \times\left\llbracket M\right\rrbracket )$
assigns to each $i\in I$ an \textbf{accessibility relation} $R_{i}$,
and $\left\llbracket \cdot\right\rrbracket :\Phi\longrightarrow\mathcal{P}(\left\llbracket M\right\rrbracket )$
is a \textbf{valuation}, assigning to each atom a set of states. With
$s\in\left\llbracket M\right\rrbracket $, call $Ms=(\left\llbracket M\right\rrbracket ,R,\left\llbracket \cdot\right\rrbracket ,s)$
a \textbf{pointed Kripke model}. The used semantics are standard,
including the modal clause:
\[
Ms\vDash\square_{i}\varphi\mbox{ iff for all }t:sR_{i}t\mbox{ implies }Mt\vDash\varphi.
\]

\noindent Throughout, we work with pointed Kripke models. Working
with modal logics, we find it natural to identify pointed Kripke models
that are considered equivalent by the logic used. The domains of interest
are thus the following type of quotient spaces:
\begin{definition}
The $\boldsymbol{\mathcal{L}}_{\Lambda}$ \textbf{modal space} of
a set of pointed Kripke models $X$ is the set $\boldsymbol{X}=\{\boldsymbol{x}:x\in X\}$
for $\boldsymbol{x}=\{y\in X:y\vDash\varphi\text{ iff }x\vDash\varphi\text{ for all }\boldsymbol{\varphi}\in\boldsymbol{\mathcal{L}}_{\Lambda}\}$.
\end{definition}

\noindent Working with an $\boldsymbol{\mathcal{L}}_{\Lambda}$ modal
space portrays that we only are interested in differences between
pointed Kripke models insofar as these are modally expressible and
are considered differences by $\Lambda$.\smallskip{}

In a modal space, how may we conceptualize that a sequence $\boldsymbol{x_{1}},\boldsymbol{x_{2}},...$
converges to some point $\boldsymbol{x}$? Focusing on the concept
from which we derive the notion of identity in modal spaces, namely
$\Lambda$-propositions, we find it natural to think of $\seq$ as
converging to $\boldsymbol{x}$ just in case $\boldsymbol{x_{n}}$
moves towards satisfying all the same $\Lambda$-propositions as $\boldsymbol{x}$
as $n$ goes to infinity. We thus offer the following definition:
\begin{definition}
A sequence of points $\seq$ in an $\boldsymbol{\mathcal{L}}_{\Lambda}$
modal space $\boldsymbol{X}$ is said to \textbf{logically converge}
to the point $\boldsymbol{x}$ in $\boldsymbol{X}$ iff for every
$\boldsymbol{\varphi}\in\boldsymbol{\mathcal{L}}_{\Lambda}$ for which
$x\vDash\varphi$, there is an $N\in\mathbb{N}$ such that $x_{n}\vDash\varphi$
for all $n\geq N$.
\end{definition}

To avoid re-proving useful results concerning this notion of convergence,
we next turn to seeking a topology for which logical convergence coincides
with topological convergence. Recall that for a topology $\mathcal{T}$
on a set $X$, a sequence of points $x_{1},x_{2},...$ is said to
\textbf{converge} to $x$ in the \textbf{topological space} $(X,\mathcal{T})$
iff for every open set $U\in\mathcal{T}$ containing $x$, there is
an $N\in\mathbb{N}$ such that $x_{n}\in U$ for all $n\geq N$. 

\section{Topologies on Modal Spaces\label{sec:Topologies}}

One way of obtaining a \textbf{topology} on a space is to define a
\textbf{metric} for said space. Several metrics have been suggested
for sets of pointed Kripke models \cite{aucher2010generalizing,Caridroit2016}.
These metrics are only defined for finite pointed Kripke models, but
incorporating ideas from the metrics of \cite{Lind1995} on \textbf{shift
spaces} and \cite{Goranko2004} on sets of \textbf{first-order logical
theories} allows us to simultaneously generalize and simplify the
\textbf{$n$-Bisimulation-based Distance} of \cite{Caridroit2016}
to the degree of applicability:

Let $\boldsymbol{X}$ be a modal space for which modal equivalence
and \textbf{bisimilarity} coincide\footnote{That all models in $X$ are \textbf{image-finite} is a sufficient
condition, cf. the Hennessy-Milner Theorem. See e.g. \cite{BlueModalLogic2001}
or \cite{ModelTheoryModalLogic}.} and let $\leftrightarroweq_{n}$ relate $x,y\in X$ iff $x$ and
$y$ are \textbf{$n$-bisimilar}. Then proving 

\setstretch{.8}
\[
d_{B}(\boldsymbol{x},\boldsymbol{y})=\begin{cases}
0 & \text{ if }x\leftrightarroweq_{n}y\text{ for all }n\\
\frac{1}{2^{n}} & \text{ if }n\text{ is the least intenger such that }x\not\leftrightarroweq_{n}y
\end{cases}
\]
\setstretch{1}a metric on $\boldsymbol{X}$ is trivial. We refer
to $d_{B}$ as the \textbf{$n$-bisimulation metric}, and to the induced
\textbf{metric topology }as the \textbf{$n$-bisimulation topology},
denoted $\mathcal{T}_{B}$. A basis of the topology $\mathcal{T}_{B}$
is given by the set of elements ${B_{\boldsymbol{x}n}=\{\boldsymbol{y}\in\boldsymbol{X}\colon y\leftrightarroweq_{n}x\}.}$\smallskip{}

Considering the intimate link between modal logic and bisimulation,
we consider both $n$-bisimulation metric and topology highly natural.\footnote{Space does not allow for a discussion of the remaining metrics of
\cite{aucher2010generalizing,Caridroit2016}, but see \cite{BigTech}.} Alas, logical convergence does not:
\begin{proposition}
\label{prop:suck-it}Logical convergence in arbitrary modal space
$\boldsymbol{X}$ does not imply convergence in the topological space
$(\boldsymbol{X},\mathcal{T}_{B})$.
\end{proposition}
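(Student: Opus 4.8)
The plan is to refute the implication by exhibiting a single modal space together with a sequence that logically converges but fails to converge in $\mathcal{T}_{B}$. The conceptual reason the two notions come apart is a difference in uniformity: logical convergence asks that for \emph{each} $\Lambda$-proposition $\boldsymbol{\varphi}$ true at the limit the sequence eventually satisfies $\varphi$, with the threshold $N$ allowed to depend on $\boldsymbol{\varphi}$, whereas $\mathcal{T}_{B}$-convergence demands, for each fixed $n$, that the sequence eventually be $n$-bisimilar to the limit, i.e.\ agree with it on \emph{all} formulas of modal depth at most $n$ simultaneously. When infinitely many logically inequivalent formulas already live at a fixed modal depth, these two requirements separate. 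This forces the construction to use an infinite atom set, which is permitted since $\Phi$ is only assumed countable.

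Concretely, I would take $\Lambda=K$ and $\Phi=\{p_{1},p_{2},\dots\}$ infinite, and work with single-state models carrying the empty accessibility relation $R_{i}=\emptyset$ for every $i\in I$; such models are finite, hence image-finite, so modal equivalence and bisimilarity coincide, $d_{B}$ is a genuine metric on the resulting space, and $\mathcal{T}_{B}$ is well-defined. Let $x$ be the point at which every atom is false, and let $x_{k}$ be the point at which $p_{k}$ is true and every other atom is false. Since distinct $x_{k}$ disagree on some atom and each $x_{k}$ disagrees with $x$ on $p_{k}$, the points $\boldsymbol{x},\boldsymbol{x_{1}},\boldsymbol{x_{2}},\dots$ are pairwise distinct, and I take $\boldsymbol{X}$ to be the modal space they form.

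Next I would verify that $\boldsymbol{x_{1}},\boldsymbol{x_{2}},\dots$ logically converges to $\boldsymbol{x}$. Because the relations are empty, every $\square_{i}\psi$ is vacuously true at every point, so the truth value of any formula $\varphi$ at any of these points is fixed by the atomic valuation at the point via ordinary propositional evaluation. Given $\boldsymbol{\varphi}$ with $x\vDash\varphi$, the formula $\varphi$ mentions only finitely many atoms; for all $k$ exceeding their indices, $x_{k}$ assigns those atoms exactly as $x$ does, whence $x_{k}\vDash\varphi$. This yields logical convergence to $\boldsymbol{x}$. The one step demanding care is precisely this passage from atomic agreement to full-formula agreement; collapsing the modal structure by taking empty relations is what reduces the question to the finitely many atoms occurring in $\varphi$, and is the crux of the construction.

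Finally I would show the sequence has no $\mathcal{T}_{B}$-limit in $\boldsymbol{X}$. Since $n$-bisimilarity entails agreement on all atoms for every $n$, it suffices to inspect the base level. For the candidate limit $\boldsymbol{x}$, every $x_{k}$ disagrees with $x$ on $p_{k}$, so $\boldsymbol{x_{k}}\notin B_{\boldsymbol{x}0}$ for all $k$, and no tail of the sequence enters this basic neighborhood of $\boldsymbol{x}$. For any other candidate limit $\boldsymbol{x_{j}}$, only the single index $k=j$ gives $x_{k}\leftrightarroweq_{0}x_{j}$, so again no tail lies in $B_{\boldsymbol{x_{j}}0}$. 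Hence the sequence converges to no point of $(\boldsymbol{X},\mathcal{T}_{B})$, and the implication fails.
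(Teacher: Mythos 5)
Your proposal is correct and follows essentially the same route as the paper's own proof: an infinite atom set, models satisfying $\square_i\bot$ (empty relations), and a sequence that agrees with the limit on ever more atoms but always disagrees on at least one, so that logical convergence holds while no tail ever enters the basic neighborhood $B_{\boldsymbol{x}0}$. Your version merely dualizes the valuations (limit with all atoms false rather than all true) and adds the worthwhile explicit checks that the paper leaves implicit, namely that bisimilarity and modal equivalence coincide on your space so $\mathcal{T}_{B}$ is well-defined, and that the sequence converges to no other point either.
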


\begin{proof}
Let $\boldsymbol{X}$ be an $\boldsymbol{\mathcal{L}}_{\Lambda}$
modal space with $\mathcal{L}$ based on the atoms $\Phi=\{p_{k}\colon k\in\mathbb{N}\}$.
Let $x\in X$ satisfy $\square\bot$ and $p_{k}$ for all $k\in\mathbb{N}$.
Let \textbf{$\boldsymbol{x_{1}},\boldsymbol{x_{2}},...$} be a sequence
in $\boldsymbol{X}$ such that for all $k\in\mathbb{N}$, $x_{k}$
satisfies $\square\bot$, $p_{m}$ for all $m\leq k$, and $\neg p_{l}$
for all $l>k$. Then for all $\boldsymbol{\varphi}\in\boldsymbol{\mathcal{L}}_{\Lambda}$
for which $x\vDash\varphi$, there is an $N$ such that $x_{n}\vDash\varphi$
for all $n\geq N$, hence the sequence \textbf{$\boldsymbol{x_{1}},\boldsymbol{x_{2}},...$}
converges to $\boldsymbol{x}$.There does not, however, exist any
$N'$ such that $\boldsymbol{x}_{n'}\in B_{\boldsymbol{x}0}$ for
all $n'\geq N'$. Hence $\boldsymbol{x_{1}},\boldsymbol{x_{2}},...$
does not converge to $\boldsymbol{x}$ in $\mathcal{T}_{B}$.\qed
\end{proof}

\noindent Proposition \ref{prop:suck-it} implies that the $n$-bisimulation
topology may not straight-forwardly be used to establish negative
results concerning logical convergence. That it may be used for positive
cases is a corollary to Propositions \ref{prop:yay} and \ref{prop:nBisim-strictly-finer}
below. On the upside, logical convergence coincides with convergence
in the \textbf{$n$}-bisimulation topology \textendash{} i.e. Proposition
\ref{prop:suck-it} fails \textendash{} when $\mathcal{L}$ has finite
atoms. This is a corollary to Proposition  \ref{prop:nBisim-finite}.\medskip{}

An alternative to a metric-based approach to topologies is to construct
the set of all open sets directly. Comparing the definition of logical
convergence with that of convergence in topological spaces is highly
suggestive: Replacing every occurrence of the formula $\varphi$ with
an open set $U$ while replacing satisfaction $\vDash$ with inclusion
$\in$ transforms the former definition into the latter. Hence the
collection of sets $U_{\boldsymbol{\varphi}}=\{\boldsymbol{x}\in\boldsymbol{X}:x\vDash\varphi\}$,
$\boldsymbol{\varphi}\in\boldsymbol{\mathcal{L}}_{\Lambda}$, seems
a reasonable candidate for a topology. Alas, this collection is not
closed under arbitrary unions, as all formulas are finite. Hence it
is not a topology. It does however constitute the basis for a topology,
in fact the somewhat influential \textbf{Stone topology}, $\mathcal{T}_{S}$.

The Stone topology is traditionally defined on the collection of complete
theories for some propositional, first-order or modal logic, but is
straightforwardly applicable to modal spaces. Moreover, it satisfies
our \emph{desideratum}:
\begin{proposition}
\label{prop:yay}For any \textup{$\boldsymbol{\mathcal{L}}_{\Lambda}$}
modal space $\boldsymbol{X}$, a sequence $\seq$ logically converges
to the point $\boldsymbol{x}$ if, and only if, it converges to $\boldsymbol{x}$
in $(\boldsymbol{X},\mathcal{T}_{S})$.
\end{proposition}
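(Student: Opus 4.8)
The plan is to exploit the deliberate parallelism between the two definitions by reducing topological convergence to convergence tested on the basic open sets $U_{\boldsymbol{\varphi}}$. The bridge is the elementary observation that $\boldsymbol{x_{n}}\in U_{\boldsymbol{\varphi}}$ holds precisely when $x_{n}\vDash\varphi$, so that membership in a basic open set is literally satisfaction of the corresponding formula. Under this correspondence the defining clause of logical convergence, ``for every $\boldsymbol{\varphi}$ with $x\vDash\varphi$ there is an $N$ with $x_{n}\vDash\varphi$ for all $n\geq N$'', becomes ``for every basic open $U_{\boldsymbol{\varphi}}$ containing $\boldsymbol{x}$, the sequence is eventually in $U_{\boldsymbol{\varphi}}$''. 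The whole argument is then a matter of moving between ``basic open'' and ``open''.

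First I would dispatch the direction from topological to logical convergence, which is immediate. Assuming $\seq$ converges to $\boldsymbol{x}$ in $(\boldsymbol{X},\mathcal{T}_{S})$, I fix any $\boldsymbol{\varphi}\in\boldsymbol{\mathcal{L}}_{\Lambda}$ with $x\vDash\varphi$ and note that $U_{\boldsymbol{\varphi}}$ is itself an open set containing $\boldsymbol{x}$. Topological convergence then supplies an $N$ with $\boldsymbol{x_{n}}\in U_{\boldsymbol{\varphi}}$, i.e. $x_{n}\vDash\varphi$, for all $n\geq N$, which is exactly the condition for logical convergence.

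For the converse I would invoke the standard fact that a basis suffices to test convergence. Since $\{U_{\boldsymbol{\varphi}}\colon\boldsymbol{\varphi}\in\boldsymbol{\mathcal{L}}_{\Lambda}\}$ is a basis for $\mathcal{T}_{S}$, every open $U$ with $\boldsymbol{x}\in U$ contains a basic element with $\boldsymbol{x}\in U_{\boldsymbol{\varphi}}\subseteq U$; in particular $x\vDash\varphi$. Given logical convergence, this $\boldsymbol{\varphi}$ yields an $N$ with $x_{n}\vDash\varphi$, hence $\boldsymbol{x_{n}}\in U_{\boldsymbol{\varphi}}\subseteq U$, for all $n\geq N$. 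As $U$ was an arbitrary open neighborhood of $\boldsymbol{x}$, this is precisely topological convergence.

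I expect no genuine obstacle here: the statement is essentially definitional once the translation $\boldsymbol{x}\in U_{\boldsymbol{\varphi}}\Leftrightarrow x\vDash\varphi$ is in place, and the only nontrivial ingredient is the basis reduction used in the converse, whose justification is just the defining property $\boldsymbol{x}\in U_{\boldsymbol{\varphi}}\subseteq U$ of a basis. For completeness I would remark that the $U_{\boldsymbol{\varphi}}$ really do form a basis, since $\boldsymbol{X}=U_{\boldsymbol{\top}}$ and $U_{\boldsymbol{\varphi}}\cap U_{\boldsymbol{\psi}}=U_{\boldsymbol{\varphi\wedge\psi}}$, so that the appeal to the basis characterization is legitimate; but as the excerpt has already asserted that this collection is a basis for $\mathcal{T}_{S}$, this need only be noted in passing rather than reproved.
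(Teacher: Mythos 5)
Your proof is correct and takes essentially the same route as the paper's own: both directions hinge on the translation $\boldsymbol{x}\in U_{\boldsymbol{\varphi}}\Leftrightarrow x\vDash\varphi$, with the logical-to-topological direction passing through a basis element $U_{\boldsymbol{\varphi}}\subseteq U$ containing $\boldsymbol{x}$ and the converse applying topological convergence directly to the open set $U_{\boldsymbol{\varphi}}$. Your closing remark verifying that the $U_{\boldsymbol{\varphi}}$ form a basis (via $U_{\boldsymbol{\top}}$ and $U_{\boldsymbol{\varphi}}\cap U_{\boldsymbol{\psi}}=U_{\boldsymbol{\varphi\wedge\psi}}$) is a small addition the paper states without proof in the surrounding text.
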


\begin{proof}
Assume $\boldsymbol{x_{1}},\boldsymbol{x_{2}},...$ logically converges
to $\boldsymbol{x}$ in $\boldsymbol{X}$ and that $U$ containing
$\boldsymbol{x}$ is open in $\mathcal{T}_{S}$. Then there is a basis
element $U_{\boldsymbol{\varphi}}\subseteq U$ with $\boldsymbol{x}\in U_{\boldsymbol{\varphi}}$.
So $x\vDash\varphi$. By assumption, there exists an $N$ such that
$x_{n}\vDash\varphi$ for all $n\geq N$. Hence $\boldsymbol{x_{n}}\in U_{\boldsymbol{\varphi}}\subseteq U$
for all $n\geq N$.

Assume $\boldsymbol{x_{1}},\boldsymbol{x_{2}},...$ converges to $\boldsymbol{x}$
in $(\boldsymbol{X},\mathcal{T}_{S})$ and let $x\vDash\varphi$.
Then $\boldsymbol{x}\in U_{\boldsymbol{\varphi}}$, which is open.
As the sequence converges, there exists an $N$ such that $\boldsymbol{x_{n}}\in U_{\boldsymbol{\varphi}}$
for all $n\geq N$. Hence $x_{n}\vDash\varphi$ for all $n\geq N$.\qed
\end{proof}

Apart from its attractive characteristic concerning convergence, working
on the basis of a logic, the Stone topology imposes a natural structure.
As is evident from its basis, every subset of $\boldsymbol{X}$ characterizable
by a single $\Lambda$-proposition $\boldsymbol{\varphi}\in\boldsymbol{\mathcal{L}}_{\Lambda}$
is clopen. If the logic $\Lambda$ is compact and $\boldsymbol{X}$
saturated (see fn. \ref{fn:sat}), also the converse is true: every
clopen set is of the form $U_{\varphi}$ for some $\varphi$. We refer
to \cite{BigTech} for proofs and a precise characterization result.
In this case, a subset is open, but not closed, iff it is characterizable
only by an infinitary disjunction of $\Lambda$-propositions, and
a subset if closed, but not open, iff it is characterizable only by
an infinitary conjunction of $\Lambda$-propositions. The Stone topology
thus transparently reflects the properties of logic, language and
topology. Moreover, it enjoys practical topological properties:
\begin{proposition}
\label{prop:Stone}For any \textup{$\boldsymbol{\mathcal{L}}_{\Lambda}$}
modal space $\boldsymbol{X}$, $(\boldsymbol{X},\mathcal{T}_{S})$
is \textbf{Hausdorff} and \textbf{totally disconnected}. If $\Lambda$
is (logically)\textbf{ compact}\footnote{\noindent A logic $\Lambda$ is logically compact if any arbitrary
set $A$ of formulas is $\Lambda$-consistent iff every finite subset
of $A$ is $\Lambda$-consistent.} and $\boldsymbol{X}$ is \textbf{saturated}\footnote{\noindent \label{fn:sat}An $\boldsymbol{\mathcal{L}}_{\Lambda}$
modal space $\boldsymbol{X}$ is saturated iff for each $\Lambda$-consistent
set of formulas $A$, there is an $\boldsymbol{x}\in\boldsymbol{X}$
such that $x\vDash A$. Saturation relates to the notion of \textbf{strong
completeness}, cf. e.g. \cite[Prop.4.12]{BlueModalLogic2001}. See
\cite{BigTech} for its use in a more general context.}, then $(\boldsymbol{X},\mathcal{T}_{S})$ is also (topologically)\textbf{
compact}. 
\end{proposition}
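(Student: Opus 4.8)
The plan is to extract first the single structural fact that drives all three claims: each basic open set $U_{\boldsymbol{\varphi}}$ is in fact clopen. Since $\neg\varphi\in\mathcal{L}$ whenever $\varphi\in\mathcal{L}$, and since at every point exactly one of $x\vDash\varphi$, $x\vDash\neg\varphi$ holds, the complement of $U_{\boldsymbol{\varphi}}$ is the basic open set $U_{\boldsymbol{\neg\varphi}}$. Hence $U_{\boldsymbol{\varphi}}$ is open with open complement, i.e.\ clopen. This was already asserted in the surrounding text, and I would record it explicitly as the springboard for everything that follows.

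For Hausdorffness I would take distinct points $\boldsymbol{x}\neq\boldsymbol{y}$ of $\boldsymbol{X}$. By the very definition of a modal space as a quotient modulo logical equivalence, distinctness means $x$ and $y$ disagree on some formula; choosing $\varphi$ with $x\vDash\varphi$ and $y\not\vDash\varphi$, the sets $U_{\boldsymbol{\varphi}}$ and $U_{\boldsymbol{\neg\varphi}}$ are disjoint open neighborhoods separating them. Total disconnectedness then comes essentially for free from the clopen basis: given any subset $C$ containing two distinct points $\boldsymbol{x},\boldsymbol{y}$, the clopen set $U_{\boldsymbol{\varphi}}$ just constructed splits $C$ into the two nonempty relatively open pieces $C\cap U_{\boldsymbol{\varphi}}$ and $C\cap U_{\boldsymbol{\neg\varphi}}$, witnessing that $C$ is disconnected. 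Thus only singletons (and $\emptyset$) are connected, which is exactly total disconnectedness.

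The hard part will be compactness, as it is the only claim invoking the two hypotheses. I would argue via the standard reduction that a space is compact once every cover by basis elements admits a finite subcover, so it suffices to consider a basic cover $\{U_{\boldsymbol{\varphi_i}}\}_{i\in I}$ of $\boldsymbol{X}$ and derive a finite subcover. Suppose, for contradiction, that no finite subfamily covers. Passing to complements, for every finite $F\subseteq I$ the set $\bigcap_{i\in F}U_{\boldsymbol{\neg\varphi_i}}$ is nonempty; that is, every finite subset of $A=\{\neg\varphi_i : i\in I\}$ is satisfied at some point of $\boldsymbol{X}$ and hence (as satisfiable sets are $\Lambda$-consistent by soundness) is $\Lambda$-consistent. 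Logical compactness of $\Lambda$ now upgrades this to $\Lambda$-consistency of all of $A$, and saturation of $\boldsymbol{X}$ produces a point $\boldsymbol{x}$ with $x\vDash A$, i.e.\ $\boldsymbol{x}\notin U_{\boldsymbol{\varphi_i}}$ for every $i$ --- contradicting that the family covers. Hence a finite subcover exists. The delicate points I would watch are the reduction to basic covers (routine, but it is what lets me exploit the negation-closed basis) and the precise bookkeeping that satisfiability of a finite subset yields $\Lambda$-consistency, so that logical compactness and then saturation can be chained in exactly that order.
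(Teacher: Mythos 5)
Your clopen-basis observation, the Hausdorff separation, and the total-disconnectedness argument are all correct, and your compactness argument is, in outline, the standard one---which is also all the paper itself offers: its ``proof'' merely notes that these facts are well known for Stone topologies on spaces of complete theories and defers details to \cite{BigTech}. So you have in effect reconstructed the intended argument: complements of basic opens are the basic opens $U_{\boldsymbol{\neg\varphi}}$, distinct points of the quotient disagree on some formula, compactness reduces to covers by basic opens, and then the chain finite satisfiability $\Rightarrow$ finite $\Lambda$-consistency $\Rightarrow$ (logical compactness) $\Lambda$-consistency of $A$ $\Rightarrow$ (saturation) a point realizing $A$, contradicting that the family was a cover.

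There is, however, one step you treat as free that is not free in the paper's stated generality: ``satisfiable sets are $\Lambda$-consistent by soundness.'' Here $\Lambda$ is an arbitrary normal extension of $K$ and $X$ an arbitrary set of pointed Kripke models, so a point of $X$ need not be a model of $\Lambda$ at all: take $\Lambda = K + (\square p \rightarrow p)$ and a pointed model whose designated point has no successors; it satisfies $\square\bot$, so its theory is $\Lambda$-inconsistent. This is not pedantry, because without an assumption excluding such points the proposition itself fails. Start with a saturated family of genuine $\Lambda$-models and adjoin points $y_{n}$ satisfying $\square\bot$, $p_{m}$ for all $m\leq n$ and $\neg p_{l}$ for all $l>n$: the space remains saturated and $\Lambda$ remains compact, yet the cover consisting of $U_{\boldsymbol{\neg\square\bot}}$ together with all $U_{\boldsymbol{\square\bot\wedge\neg p_{m}}}$ has no finite subcover. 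What your argument actually needs---and what should be stated as a hypothesis rather than attributed to ``soundness''---is that every point of $X$ satisfies every $\Lambda$-theorem, equivalently that each point's theory is maximal $\Lambda$-consistent; then finitely satisfiable-in-$X$ does imply $\Lambda$-consistent, and your proof goes through verbatim. In fairness, the paper needs the same unstated hypothesis, both for the proposition to be true and, arguably, for its quotient construction to behave as intended (points must not distinguish $\Lambda$-provably equivalent formulas). So: right architecture, but make that assumption explicit instead of deriving it from a soundness claim that fails for $\Lambda\supsetneq K$ over arbitrary Kripke models.
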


\begin{proof}
\noindent These properties are well-known for the Stone topology applied
to complete theories. For the topology applied to modal spaces, we
defer to \cite{BigTech}.\qed
\end{proof}

One may interject that, as having a metric may facilitate obtaining
results, it may cause a loss of tools to move away from the $n$-bisimulation
topology. The Stone topology, however, is \textbf{metrizable}. A family
of metrics inducing it, generalizing the Hamming distance to infinite
strings by using weighted sums, was introduced in \cite{BigTech}.
We here present a sub-family, suited for modal spaces:
\begin{definition}
\label{def:metrics}Let $D\subseteq\boldsymbol{\mathcal{L}}_{\Lambda}$
contain for every $\boldsymbol{\psi}\in\boldsymbol{\mathcal{L}}_{\Lambda}$
some $\{\boldsymbol{\boldsymbol{\varphi}}_{i}\}_{i\in I}$ that $\Lambda$-entails
either $\mathbf{\boldsymbol{\psi}}$ or $\mathbf{\boldsymbol{\neg\psi}}$,
and let \textbf{$\boldsymbol{\varphi}_{1},\boldsymbol{\varphi}_{2},...$}
be an enumeration of $D$. 

Let $\boldsymbol{X}$ be an $\boldsymbol{\mathcal{L}}_{\Lambda}$
modal space. For all $\boldsymbol{x},\boldsymbol{y}\in\boldsymbol{X}$,
for all $k\in\mathbb{N}$, let

\setstretch{.8}\vspace{4pt}

\noindent 
\[
d_{k}(\boldsymbol{x},\boldsymbol{y})={\textstyle \begin{cases}
0 & \text{ if }x\vDash\varphi\text{ iff }y\vDash\varphi\text{ for \ensuremath{\varphi\in\boldsymbol{\varphi}_{k}}}\\
1 & \text{ else}
\end{cases}}
\]

\setstretch{1}Let $w:D\longrightarrow\mathbb{R}_{>0}$ assign strictly
positive \textbf{weight} to each $\boldsymbol{\varphi}_{k}$ in $D$
such that $(w(\boldsymbol{\varphi}_{n}))$ forms a convergent series.
Define the function $d_{w}:\boldsymbol{X}^{2}\longrightarrow\mathbb{R}$
by
\[
d_{w}(\boldsymbol{x},\boldsymbol{y})=\sum_{k=0}^{\infty}w(\boldsymbol{\varphi}_{k})d_{k}(\boldsymbol{x},\boldsymbol{y})
\]
for all $\boldsymbol{x},\boldsymbol{y}\in\boldsymbol{X}$. The set
of these functions is denoted $D_{\boldsymbol{X}}$. Let $\mathcal{D}_{D,\boldsymbol{X}}=\cup_{D\subseteq\boldsymbol{\mathcal{L}}_{\Lambda}}D_{\boldsymbol{X}}$.
\end{definition}

We refer to \cite{BigTech} for the proof establishing the following
proposition:
\begin{proposition}
\label{prop:metricsStone}Let $\boldsymbol{X}$ be an \textup{$\boldsymbol{\mathcal{L}}_{\Lambda}$}
modal space and $d_{w}$ belong to $\mathcal{D}_{\boldsymbol{X}}$.
Then $d_{w}$ is a metric on $\boldsymbol{X}$ and the metric topology
$\mathcal{T}_{w}$ induced by $d_{w}$ on $\boldsymbol{X}$ is the
Stone topology of $\Lambda$.
\end{proposition}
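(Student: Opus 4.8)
\emph{Overview and metric axioms.} The plan is to verify the three metric axioms for $d_w$ and then compare the two topologies by showing each has a basis that refines the other. First note that $d_w$ is well defined: since each $d_k(\boldsymbol{x},\boldsymbol{y})\in\{0,1\}$ and $w>0$, we have $0\le d_w(\boldsymbol{x},\boldsymbol{y})\le\sum_{k}w(\boldsymbol{\varphi}_k)<\infty$, the last series converging by hypothesis; this also yields non-negativity. Symmetry is immediate, since each $d_k$ is defined through the symmetric biconditional ``$x\vDash\varphi$ iff $y\vDash\varphi$''. For the triangle inequality I would argue coordinatewise: each $d_k$ is the discrete metric on the two-valued quotient ``agrees/disagrees on $\boldsymbol{\varphi}_k$'', so $d_k(\boldsymbol{x},\boldsymbol{z})\le d_k(\boldsymbol{x},\boldsymbol{y})+d_k(\boldsymbol{y},\boldsymbol{z})$ for every $k$; multiplying by the positive weights and summing preserves the inequality. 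The only axiom with real content is identity of indiscernibles. If $\boldsymbol{x}=\boldsymbol{y}$ then every $d_k$ vanishes and $d_w=0$. Conversely, as all summands are non-negative and every weight is strictly positive, $d_w(\boldsymbol{x},\boldsymbol{y})=0$ forces $d_k(\boldsymbol{x},\boldsymbol{y})=0$ for all $k$, i.e. $x$ and $y$ agree on every member of $D$; here I would invoke the defining property of $D$, namely that $D$ decides every $\boldsymbol{\psi}\in\boldsymbol{\mathcal{L}}_\Lambda$, so that agreement on all of $D$ propagates to agreement on every $\Lambda$-proposition, whence $x$ and $y$ are logically equivalent and $\boldsymbol{x}=\boldsymbol{y}$ in $\boldsymbol{X}$.

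\emph{The inclusion $\mathcal{T}_w\subseteq\mathcal{T}_S$.} It suffices to show that every open ball is a $\mathcal{T}_S$-neighbourhood of each of its points, i.e. contains a basic set $U_{\boldsymbol{\varphi}}$ around that point. Fix $\boldsymbol{x}$ and $\epsilon>0$. Since $\sum_k w(\boldsymbol{\varphi}_k)$ converges, choose $N$ with $\sum_{k>N}w(\boldsymbol{\varphi}_k)<\epsilon$, and let $\varphi$ be the single formula $\bigwedge_{k\le N}\varphi_k^{\ast}$, where $\varphi_k^{\ast}$ is $\varphi_k$ if $x\vDash\varphi_k$ and $\neg\varphi_k$ otherwise. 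Then $\boldsymbol{x}\in U_{\boldsymbol{\varphi}}$, and any $\boldsymbol{y}\in U_{\boldsymbol{\varphi}}$ agrees with $\boldsymbol{x}$ on $\boldsymbol{\varphi}_0,\dots,\boldsymbol{\varphi}_N$, so $d_k(\boldsymbol{x},\boldsymbol{y})=0$ for $k\le N$ and hence $d_w(\boldsymbol{x},\boldsymbol{y})\le\sum_{k>N}w(\boldsymbol{\varphi}_k)<\epsilon$. Thus $U_{\boldsymbol{\varphi}}\subseteq B_{d_w}(\boldsymbol{x},\epsilon)$, and since the balls form a basis of $\mathcal{T}_w$, every $\mathcal{T}_w$-open set is $\mathcal{T}_S$-open.

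\emph{The inclusion $\mathcal{T}_S\subseteq\mathcal{T}_w$.} It suffices to show each basic Stone set $U_{\boldsymbol{\psi}}$ is $d_w$-open. Fix $\boldsymbol{x}\in U_{\boldsymbol{\psi}}$, so $x\vDash\psi$. The defining property of $D$ supplies a subfamily of $D$ deciding $\boldsymbol{\psi}$; because $\Lambda$-derivations are finitary, I may reduce this to a finite family $\{\boldsymbol{\varphi}_{k_1},\dots,\boldsymbol{\varphi}_{k_m}\}\subseteq D$ whose truth values at a point determine the truth value of $\psi$. Put $\epsilon=\min_j w(\boldsymbol{\varphi}_{k_j})>0$. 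If $d_w(\boldsymbol{x},\boldsymbol{y})<\epsilon$ then $d_{k_j}(\boldsymbol{x},\boldsymbol{y})=0$ for each $j$ (otherwise a single term already reaches $\epsilon$), so $\boldsymbol{y}$ agrees with $\boldsymbol{x}$ on all of $\boldsymbol{\varphi}_{k_1},\dots,\boldsymbol{\varphi}_{k_m}$ and hence $y\vDash\psi$, i.e. $\boldsymbol{y}\in U_{\boldsymbol{\psi}}$. Therefore $B_{d_w}(\boldsymbol{x},\epsilon)\subseteq U_{\boldsymbol{\psi}}$, and $U_{\boldsymbol{\psi}}$ is $d_w$-open. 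Combining the two inclusions gives $\mathcal{T}_w=\mathcal{T}_S$.

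\emph{Main obstacle.} The analytic estimates (convergence of the tail and taking a minimum over finitely many positive weights) are routine; the genuine work lies in the two places where the abstract property of $D$ must be converted into a concrete, finite piece of information, namely identity of indiscernibles and the inclusion $\mathcal{T}_S\subseteq\mathcal{T}_w$. Both hinge on extracting from the ``$D$ decides $\boldsymbol{\psi}$'' clause a finite subfamily of $D$ whose truth values pin down $\psi$, which is precisely where the finitary character of $\Lambda$-entailment enters. I would therefore isolate this determination property as a preliminary lemma about $D$ and let the metric and topology claims follow as above.
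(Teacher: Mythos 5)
Your overall architecture---verify the metric axioms, then prove the two inclusions $\mathcal{T}_{w}\subseteq\mathcal{T}_{S}$ and $\mathcal{T}_{S}\subseteq\mathcal{T}_{w}$---is the right one, and most of it is carried out correctly: well-definedness, symmetry and the coordinatewise triangle inequality are fine, and the inclusion $\mathcal{T}_{w}\subseteq\mathcal{T}_{S}$ via the tail bound $\sum_{k>N}w(\boldsymbol{\varphi}_{k})<\epsilon$ and the signed conjunction $\bigwedge_{k\leq N}\varphi_{k}^{\ast}$ is exactly the standard argument. (For calibration: the paper gives no proof of this proposition but defers to \cite{BigTech}, so your attempt must stand on its own.)

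The genuine gap is the step you yourself flag as the crux: extracting, ``because $\Lambda$-derivations are finitary'', a finite family $\{\boldsymbol{\varphi}_{k_{1}},\dots,\boldsymbol{\varphi}_{k_{m}}\}\subseteq D$ whose truth values determine that of $\psi$. This does not follow from Definition~\ref{def:metrics} on either reading of its defining clause. If ``some $\{\boldsymbol{\varphi}_{i}\}_{i\in I}$ that $\Lambda$-entails either $\boldsymbol{\psi}$ or $\boldsymbol{\neg\psi}$'' means the family \emph{jointly} entails one of them, then finitariness indeed yields a finite subfamily with $\varphi_{k_{1}}\wedge\dots\wedge\varphi_{k_{m}}\vdash_{\Lambda}\psi$, but such a family constrains $\psi$ only at points satisfying \emph{all} of its members, and nothing guarantees your fixed $\boldsymbol{x}$ (or any point of $\boldsymbol{X}$) does; worse, on this reading the proposition itself is false, since $D=\{\boldsymbol{\bot}\}$ satisfies it ($\boldsymbol{\bot}$ entails everything) while $d_{w}\equiv0$ is no metric. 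So the reading under which the proposition is true must be the distributive one: each $\boldsymbol{\varphi}_{i}$ \emph{individually} $\Lambda$-entails $\boldsymbol{\psi}$ or $\boldsymbol{\neg\psi}$, and the family is exhaustive, i.e.\ every $x\in X$ satisfies some $\boldsymbol{\varphi}_{i}$. On that reading, however, your finitization is both unjustified and in general impossible: the exhaustive family may be infinite, and finitariness of derivations cannot shrink it to a finite exhaustive subfamily---that would amount to logical compactness, which this proposition deliberately does not assume (compactness enters only as an extra hypothesis in Proposition~\ref{prop:Stone} and Theorem~\ref{thm:DynSys}). Indeed one can construct $D$ satisfying Definition~\ref{def:metrics} over a non-saturated $\boldsymbol{X}$ such that, for some $\psi$, any finite $F\subseteq D$ is falsified in its entirety by two points of $\boldsymbol{X}$ that disagree on $\psi$, so no finite subfamily determines $\psi$. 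Your own choice $\epsilon=\min_{j}w(\boldsymbol{\varphi}_{k_{j}})$ signals the same problem: over an infinite family the infimum of the weights is $0$, because the weight series converges.

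The repair is that no finiteness is needed anywhere. Fix $\boldsymbol{x}\in U_{\boldsymbol{\psi}}$; exhaustiveness gives a \emph{single} $\boldsymbol{\varphi}_{k}\in D$ with $x\vDash\varphi_{k}$, and since $\varphi_{k}$ entails $\psi$ or $\neg\psi$ while $x\vDash\psi$, it must entail $\psi$. Then $\epsilon=w(\boldsymbol{\varphi}_{k})>0$ works: $d_{w}(\boldsymbol{x},\boldsymbol{y})<\epsilon$ forces $d_{k}(\boldsymbol{x},\boldsymbol{y})=0$, so $y\vDash\varphi_{k}$, hence $y\vDash\psi$. The same single-formula observation---rather than an unexplained appeal to ``$D$ decides every $\psi$''---also proves identity of indiscernibles: if $x,y$ agree on all of $D$ and $x\vDash\psi$, the $\varphi_{k}$ just produced is satisfied by $y$, so $y\vDash\psi$. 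In short, the work is done by the exhaustiveness clause built into $D$, not by the finitary character of $\Lambda$-entailment; the ``main obstacle'' you isolate is a red herring, and once it is replaced by this observation the rest of your argument goes through.
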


\noindent For a metric space $(\boldsymbol{X},d)$, we will also write
$\boldsymbol{X}_{d}$.\medskip{}

With variable parameters $D$ and $w$, $\mathcal{D}_{\boldsymbol{X}}$
allows one to vary the choice of metric with the problem under consideration.
E.g., if the $n$-bisimulation metric seems apt, one could choose
that, with one restriction:
\begin{proposition}
\label{prop:nBisim-finite}If $\boldsymbol{X}$ is an \textup{$\boldsymbol{\mathcal{L}}_{\Lambda}$}
modal space with $\mathcal{L}$ based on a finite atom set%
\begin{comment}
 and finite set of modal operators (Omitted: It is implied by the
language)
\end{comment}
, then $\mathcal{D}_{\boldsymbol{X}}$ contains a topological equivalent
to the $n$-bisimulation metric.
\end{proposition}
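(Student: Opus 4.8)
The plan is to reduce the claim to showing that, under a finite atom set, the $n$-bisimulation topology $\mathcal{T}_{B}$ coincides with the Stone topology $\mathcal{T}_{S}$. Indeed, by Proposition~\ref{prop:metricsStone} every $d_{w}\in\mathcal{D}_{\boldsymbol{X}}$ induces $\mathcal{T}_{S}$, whereas $d_{B}$ induces $\mathcal{T}_{B}$; since two metrics are topologically equivalent exactly when they induce the same topology, it suffices to establish $\mathcal{T}_{B}=\mathcal{T}_{S}$ and to verify that $\mathcal{D}_{\boldsymbol{X}}$ is nonempty. The latter is immediate: as $\Phi$ is finite, the language $\mathcal{L}$, and hence $\boldsymbol{\mathcal{L}}_{\Lambda}$, is countable, so taking $D=\boldsymbol{\mathcal{L}}_{\Lambda}$ (for each $\boldsymbol{\psi}$ the singleton $\{\boldsymbol{\psi}\}$ $\Lambda$-entails $\boldsymbol{\psi}$), enumerating it, and assigning weights $w(\boldsymbol{\varphi}_{k})=2^{-k}$ yields a member of $\mathcal{D}_{\boldsymbol{X}}$.

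For the inclusion $\mathcal{T}_{B}\subseteq\mathcal{T}_{S}$ I would invoke the existence of characteristic formulas. With $\Phi$ finite, for each $x\in X$ and each $n\in\mathbb{N}$ there is a formula $\chi_{x}^{n}$ of modal depth $n$ such that $y\vDash\chi_{x}^{n}$ iff $y\leftrightarroweq_{n}x$; the point is that, although $x$ may have infinitely many successors, finiteness of $\Phi$ forces only finitely many depth-$n$ formula types, so the usual inductive definition of $\chi_{x}^{n}$ (conjoining the diamonds over successor types and boxing their disjunction) is a genuine finite formula. Consequently each basis element $B_{\boldsymbol{x}n}$ of $\mathcal{T}_{B}$ equals the Stone basis element $U_{\boldsymbol{\chi_{x}^{n}}}$, so every $\mathcal{T}_{B}$-basic open set is $\mathcal{T}_{S}$-open.

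For the reverse inclusion $\mathcal{T}_{S}\subseteq\mathcal{T}_{B}$, take a Stone basis element $U_{\boldsymbol{\varphi}}$ and let $n$ be the modal depth of $\varphi$. Since $n$-bisimilar points satisfy the same formulas of depth at most $n$, the set $U_{\boldsymbol{\varphi}}$ is invariant under $\leftrightarroweq_{n}$, hence a union of $n$-bisimulation classes. Finiteness of $\Phi$ again bounds the number of such classes, so $U_{\boldsymbol{\varphi}}$ is a finite union of basis elements $B_{\boldsymbol{y}n}$ and thus $\mathcal{T}_{B}$-open. Combining the two inclusions gives $\mathcal{T}_{B}=\mathcal{T}_{S}$, and any $d_{w}$ exhibited above is then a topological equivalent of $d_{B}$.

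I expect the main obstacle to be the combinatorial heart shared by both inclusions: the fact that, over a finite atom set, there are only finitely many modal formulas of each bounded depth up to $\Lambda$-equivalence, equivalently only finitely many $n$-bisimulation classes. This underwrites both the well-definedness of the finite characteristic formulas $\chi_{x}^{n}$ and the finiteness of the union in the reverse inclusion, and it is precisely where the finite-atom hypothesis enters; without it, Proposition~\ref{prop:suck-it} shows the two topologies part ways. As this is a standard fact (see the characteristic-formula construction in \cite{BlueModalLogic2001}), I would cite rather than reprove it, while taking care to use the standing assumption on $\boldsymbol{X}$ that modal equivalence and bisimilarity coincide, so that $\leftrightarroweq_{n}$ indeed tracks depth-$n$ modal equivalence.
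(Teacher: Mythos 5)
Your proof is correct, but it takes a genuinely different route from the paper's. The paper argues by explicit construction: it takes $D$ to consist of the ($\Lambda$-classes of) characteristic formulas $\varphi_{x,n}$, graded into finite levels $D_{n}$, assigns weights $w(\boldsymbol{\varphi})=\frac{1}{|D_{n}|}\cdot\frac{1}{2^{n+1}}$ so that each depth level carries total weight $2^{-(n+1)}$, and compares the resulting $d_{w}$ directly with $d_{B}$; this tailored weighting tracks $d_{B}$ level by level and in fact yields an equivalence of metrics (stronger than the topological equivalence the statement asks for). You instead prove the topological identity $\mathcal{T}_{B}=\mathcal{T}_{S}$ and then let Proposition \ref{prop:metricsStone} do the work, so that \emph{any} member of the (nonempty) family $\mathcal{D}_{\boldsymbol{X}}$ is a topological equivalent of $d_{B}$ --- the specific choice of $D$ and $w$ becomes irrelevant. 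Interestingly, this reverses the paper's logical order: the paper obtains $\mathcal{T}_{B}=\mathcal{T}_{S}$ (for finite atom sets) as a corollary of Proposition \ref{prop:nBisim-finite}, whereas you derive the proposition from that identity. Both arguments hinge on the same key fact, the existence of characteristic formulas over a finite vocabulary, so the finite-atom hypothesis enters at the same place. Two small remarks: in your inclusion $\mathcal{T}_{S}\subseteq\mathcal{T}_{B}$ the finiteness of the number of $n$-bisimulation classes is not actually needed, since $U_{\boldsymbol{\varphi}}$ is a union (finite or not) of basic opens $B_{\boldsymbol{y}n}$; indeed that inclusion holds for arbitrary atom sets, consistently with Proposition \ref{prop:nBisim-strictly-finer}. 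And the coincidence of $\leftrightarroweq_{n}$ with depth-$n$ modal equivalence is a consequence of the finite vocabulary, not of the standing assumption that modal equivalence and (full) bisimilarity coincide on $\boldsymbol{X}$; the latter is only needed for $d_{B}$ to be well defined in the first place.
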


\begin{proof}
[sketch] As $\mathcal{L}$ is based on a finite set of atoms, for
each $\boldsymbol{x}\in\boldsymbol{X},n\in\mathbb{N}_{0}$, there
exists a characteristic formula $\varphi_{x,n}$ such that $y\vDash\varphi_{x,n}$
iff $y\leftrightarroweq_{n}x$, cf. \cite{ModelTheoryModalLogic}.
Let $D_{n}=\{\boldsymbol{\varphi_{x,n}}\colon\boldsymbol{x}\in\boldsymbol{X}\}$
and $D=\cup_{n\in\mathbb{N}_{0}}D_{n}$. Then each $D_{n}$ is finite
and $D$ satisfies Definition \ref{def:metrics}. Finally, let $w(\boldsymbol{\varphi})=\frac{1}{|D_{n}|}\cdot\frac{1}{2^{n+1}}$
for $\boldsymbol{\varphi}\in D_{n}$. Then $d_{w}\in\mathcal{D}_{\boldsymbol{X}}$
and is equivalent to the $n$-bisimulation metric $d_{b}$.\qed%
\begin{comment}
DK: Replaced ``topologically equivalent'' with equivalent as a metric,
as the latter is stronger. (and also correct)
\end{comment}
\end{proof}

\noindent As a corollary to Proposition \ref{prop:nBisim-finite},
it follows that, for finite atom languages, the $n$-bisimulation
topology is the Stone topology. This is not true in general, as witnessed
by Proposition \ref{prop:suck-it} and the following:
\begin{proposition}
\label{prop:nBisim-strictly-finer}If $\boldsymbol{X}$ is an \textup{$\boldsymbol{\mathcal{L}}_{\Lambda}$}
modal space with $\mathcal{L}$ based on a countably infinite atom
set, then the $n$-bisimulation metric topology on $\boldsymbol{X}$
is strictly finer than the Stone topology on $\boldsymbol{X}$.
\end{proposition}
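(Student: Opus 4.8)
The plan is to prove the two inclusions separately: first that $\mathcal{T}_S\subseteq\mathcal{T}_B$ on \emph{every} modal space (the easy, fully general direction), and then that the inclusion is proper once the atom set is countably infinite and the space realizes enough atomic types. Throughout I use the basis $\{B_{\boldsymbol{x}n}\}$ of $\mathcal{T}_B$ recorded in the definition of the $n$-bisimulation metric, together with the standing assumption of that definition that modal equivalence and bisimilarity coincide on $\boldsymbol{X}$.

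For $\mathcal{T}_S\subseteq\mathcal{T}_B$ I would show that every basic Stone-open $U_{\boldsymbol{\varphi}}$ is open in $\mathcal{T}_B$. Fix $\boldsymbol{x}\in U_{\boldsymbol{\varphi}}$, so $x\vDash\varphi$, and let $n$ be the modal depth of $\varphi$. Since $n$-bisimilar pointed models satisfy exactly the same formulas of modal depth at most $n$ --- a standard fact requiring no assumption on the cardinality of $\Phi$ --- every $\boldsymbol{y}\in B_{\boldsymbol{x}n}$ has $y\vDash\varphi$ and hence lies in $U_{\boldsymbol{\varphi}}$. Thus $B_{\boldsymbol{x}n}\subseteq U_{\boldsymbol{\varphi}}$ exhibits a basic $\mathcal{T}_B$-neighbourhood of $\boldsymbol{x}$ inside $U_{\boldsymbol{\varphi}}$; as $\boldsymbol{x}$ was arbitrary, $U_{\boldsymbol{\varphi}}\in\mathcal{T}_B$. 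This argument holds verbatim for every modal space and does not use that $\Phi$ is infinite.

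For strictness I would produce, \emph{uniformly at each point}, a $\mathcal{T}_B$-open set that is not Stone-open, rather than leaning on one hand-built space. The candidate is the $0$-ball $B_{\boldsymbol{x}0}=\{\boldsymbol{y}:y\leftrightarroweq_0 x\}$, the set of points agreeing with $x$ on every atom, which is clopen in the ultrametric $\mathcal{T}_B$. Suppose toward a contradiction that $B_{\boldsymbol{x}0}$ were Stone-open. Since $\boldsymbol{x}\in B_{\boldsymbol{x}0}$, some basic open $U_{\boldsymbol{\varphi}}$ with $x\vDash\varphi$ satisfies $U_{\boldsymbol{\varphi}}\subseteq B_{\boldsymbol{x}0}$. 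Being a single $\mathcal{L}$-formula, $\varphi$ mentions only finitely many atoms, so with $\Phi$ countably infinite I may pick an atom $p$ not occurring in $\varphi$. From a representative $x\in X$ I form the pointed model $x'$ by flipping the valuation of $p$ at the point. Since $p$ does not occur in $\varphi$, truth of $\varphi$ is insensitive to $p$ (a one-line induction on $\varphi$), so $x'\vDash\varphi$; but $x'$ and $x$ disagree on $p$, hence $x'\not\leftrightarroweq_0 x$. Thus $\boldsymbol{x'}\in U_{\boldsymbol{\varphi}}\setminus B_{\boldsymbol{x}0}$, contradicting $U_{\boldsymbol{\varphi}}\subseteq B_{\boldsymbol{x}0}$. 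Therefore $B_{\boldsymbol{x}0}\in\mathcal{T}_B\setminus\mathcal{T}_S$, which with the first part yields $\mathcal{T}_S\subsetneq\mathcal{T}_B$.

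The step needing care --- and the only place the hypotheses genuinely bite --- is that the flipped witness $\boldsymbol{x'}$ must actually belong to $\boldsymbol{X}$. On the canonical modal space of all pointed Kripke models, and more generally on any \textbf{saturated} space (fn.~\ref{fn:sat}), where every $\Lambda$-consistent atomic type is realized, this is immediate, so the argument runs uniformly at every $\boldsymbol{x}$ rather than at a single constructed point. For genuinely impoverished spaces (e.g.\ a singleton) the two topologies collapse to the trivial one, so some realization hypothesis of this kind is unavoidable; I would therefore carry it explicitly as the working assumption inherited from the setting in which the $n$-bisimulation topology is defined. The remaining verification, that flipping a non-occurring atom preserves satisfaction of $\varphi$, is the routine induction mentioned above and I would relegate it to a line.
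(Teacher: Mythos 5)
Your proposal is correct in substance, and it lands on the very same witness as the paper: the $0$-ball $B_{\boldsymbol{x}0}$, which is open in $\mathcal{T}_{B}$ but not in $\mathcal{T}_{S}$. The route differs in two ways. First, you prove the inclusion $\mathcal{T}_{S}\subseteq\mathcal{T}_{B}$ explicitly via modal depth ($B_{\boldsymbol{x}n}\subseteq U_{\boldsymbol{\varphi}}$ for $n$ the depth of $\varphi$), whereas the paper defers this direction to \cite{BigTech}; your argument is the standard one and is fine. Second, for strictness the paper simply reuses the space and sequence constructed in the proof of Proposition \ref{prop:suck-it}: there, $\boldsymbol{x_{1}},\boldsymbol{x_{2}},\dots$ Stone-converges to $\boldsymbol{x}\in B_{\boldsymbol{x}0}$ while never entering $B_{\boldsymbol{x}0}$, so $B_{\boldsymbol{x}0}$ cannot be Stone-open in that space. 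You instead argue pointwise: any basic Stone neighbourhood $U_{\boldsymbol{\varphi}}$ of $\boldsymbol{x}$ constrains only the finitely many atoms occurring in $\varphi$, so flipping an unmentioned atom produces a point of $U_{\boldsymbol{\varphi}}\setminus B_{\boldsymbol{x}0}$. The driving observation---a single formula sees only finitely many atoms---is identical in both arguments; your packaging is more self-contained and works uniformly at every point, at the price of the realization hypothesis you yourself flag: the flipped point must lie in $\boldsymbol{X}$. You are right that some such hypothesis is unavoidable (the proposition as literally stated fails for a singleton $\boldsymbol{X}$), and the paper finesses this only by implicitly restricting attention to the concrete space built for Proposition \ref{prop:suck-it}, where the needed points exist by construction. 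Making the hypothesis explicit is a virtue, not a defect.

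One caveat needs repair. Your fallback, ``the canonical modal space of \emph{all} pointed Kripke models,'' is not available: with $\Phi$ countably infinite, modal equivalence and bisimilarity do \emph{not} coincide on the class of all pointed Kripke models (a point whose successors are endpoints realizing exactly the valuations $p_{1}\wedge\dots\wedge p_{k}\wedge\neg p_{k+1}\wedge\dots$ for each $k$ is modally equivalent to, but not $1$-bisimilar with, the same point augmented by an endpoint successor satisfying every atom), and that coincidence is the standing assumption under which $d_{B}$ is well defined in the first place. So you should lean on the saturated formulation only, instantiated by a space that also enjoys the Hennessy--Milner property---for instance the points of the canonical model of $\Lambda$, which is modally saturated---or simply exhibit one concrete rich space, as the paper does. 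With that adjustment your proof is sound.
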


\begin{proof}
[sketch]We refer to \cite{BigTech} for details, but for $\mathcal{T}_{B}\not\subseteq\mathcal{T}_{S}$,
note that the set $B_{\boldsymbol{x}0}$ used in the proof of Prop.
\ref{prop:suck-it}, is open in $\mathcal{T}_{B}$, but not in $\mathcal{T}_{S}$.\qed
\end{proof}

With this comparison, we end our exposition of topologies on modal
spaces.

\section{Clean Maps on Modal Spaces\label{sec:Clean-Maps}}

We focus on a class of maps induced by action models applied using
product update. Action models are a popular and widely applicable
class of model transformers, generalizing important constructions
such as public announcements. An especially general version of action
models is \emph{multi-pointed} action models with \emph{postconditions}.
Postconditions allow action states in an action model to change the
valuation of atoms \cite{Benthem2006_com-change,Ditmarsch_Kooi_ontic},
thereby also allowing the representation of information dynamics concerning
situations that are not factually static. Permitting multiple points
allows the actual action states executed to depend on the pointed
Kripke model to be transformed, thus generalizing single-pointed action
models.\footnote{Multi-pointed action models are also referred to as \emph{epistemic
programs} in \cite{BaltagMoss2004}, and allow encodings akin to \emph{knowledge-based
programs} \cite{Fagin_etal_1995} of interpreted systems, cf. \cite{Rendsvig-DS-DEL-2015}.}

A \textbf{multi-pointed action model} is a tuple $\Sigma{\scriptstyle \Gamma}=(\llbracket\Sigma\rrbracket,\mathsf{R},pre,post,\Gamma)$
where $\left\llbracket \Sigma\right\rrbracket $ is a countable, non-empty
set of \textbf{actions}. The map $\mathsf{R}:I\rightarrow\mathcal{P}(\left\llbracket \Sigma\right\rrbracket \times\left\llbracket \Sigma\right\rrbracket )$
assigns an \textbf{accessibility relation} $\mathsf{R}_{i}$ on $\llbracket\Sigma\rrbracket$
to each agent $i\in I$. The map ${pre:\left\llbracket \Sigma\right\rrbracket \rightarrow\mathcal{L}}$
assigns to each action a \textbf{precondition}, and the map \linebreak{}
${post:\left\llbracket \Sigma\right\rrbracket \rightarrow\mathcal{L}}$
assigns to each action a \textbf{postcondition},\footnote{The precondition of $\sigma$ specify the conditions under which $\sigma$
is executable, while its postcondition may dictate the posterior values
of a finite, possibly empty, set of atoms.} which must be $\top$ or a conjunctive clause\footnote{I.e. a conjuction of literals, where a literal is an atom or a negated
atom.} over $\Phi$. Finally, $\emptyset\not=\Gamma\subseteq\left\llbracket \Sigma\right\rrbracket $
is the set of designated~actions. 

To obtain well-behaved total maps on a modal spaces, we must invoke
a set of mild, but non-standard, requirements: Let $X$ be a set of
pointed Kripke models. Call $\Sigma{\scriptstyle \Gamma}$ \textbf{precondition
finite} if the set $\{\boldsymbol{pre(\sigma)}\in\boldsymbol{\mathcal{L}}_{\Lambda}\colon\sigma\in\left\llbracket \Sigma\right\rrbracket \}$
is finite. This is needed for our proof of continuity. Call $\Sigma{\scriptstyle \Gamma}$
\textbf{exhaustive over }$X$ if for all $x\in X$, there is a $\sigma\in\Gamma$
such that $x\vDash pre(\sigma)$. This conditions ensures that the
action model $\Sigma{\scriptstyle \Gamma}$ is universally applicable
on $X$. Finally, call $\Sigma{\scriptstyle \Gamma}$ \textbf{deterministic
over }$X$ if $X\vDash pre(\sigma)\wedge pre(\sigma')\rightarrow\bot$
for each $\sigma\neq\sigma'\in\Gamma$. Together with exhaustivity,
this condition ensures that the product of $\Sigma{\scriptstyle \Gamma}$
and any $Ms\in X$ is a (single-)pointed Kripke model, i.e., that
the actual state after the updates is well-defined and unique.

Let $\Sigma{\scriptstyle \Gamma}$ be exhaustive and deterministic
over $X$ and let $Ms\in X$. Then the \textbf{product update} of
$Ms$ with $\Sigma{\scriptstyle \Gamma}$, denoted $Ms\otimes\Sigma{\scriptstyle \Gamma}$,
is the pointed Kripke model $(\left\llbracket M\Sigma\right\rrbracket ,R',\llbracket\cdot\rrbracket',s')$
with
\begin{eqnarray*}
\left\llbracket M\Sigma\right\rrbracket  & = & \left\{ (s,\sigma)\in\left\llbracket M\right\rrbracket \times\left\llbracket \Sigma\right\rrbracket :(M,s)\vDash pre(\sigma)\right\} \\
R' & = & \left\{ ((s,\sigma),(t,\tau)):(s,t)\in R_{i}\mbox{ and }(\sigma,\tau)\in\mathsf{R}_{i}\right\} ,\text{ for all }i\in N\\
\left\llbracket p\right\rrbracket ' & = & \left\{ (s,\sigma)\!:\!s\in\left\llbracket p\right\rrbracket \!,post(\sigma)\nvDash\neg p\right\} \cup\left\{ (s,\sigma)\!:\!post(\sigma)\vDash p\right\} ,\text{ for all }p\in\Phi\\
s' & = & (s,\sigma):\sigma\in\Gamma\mbox{ and }Ms\vDash pre(\sigma)
\end{eqnarray*}

\noindent Call $\Sigma{\scriptstyle \Gamma}$ \textbf{closing over
$X$} if for all $x\in X,$ $x\otimes\Sigma{\scriptstyle \Gamma}\in X$.
With $\Sigma{\scriptstyle \Gamma}$ exhaustive and deterministic, $\Sigma{\scriptstyle \Gamma}$ and $\otimes$ induce a well-defined total map on $X$.\smallskip{}

The class of maps of interest in the present is then the following:
\begin{definition}
Let $\boldsymbol{X}$ be an \textup{$\boldsymbol{\mathcal{L}}_{\Lambda}$}
modal space. A map $\boldsymbol{f}:\boldsymbol{X}\rightarrow\boldsymbol{X}$
is called \textbf{clean} \linebreak{}
if there exists a precondition finite, multi-pointed action model
$\Sigma{\scriptstyle \Gamma}$ closing, deterministic and exhaustive
over $X$ such that $\boldsymbol{f}(\boldsymbol{x})=\boldsymbol{y}$
iff $x\otimes\Sigma{\scriptstyle \Gamma}\in\boldsymbol{y}$ for all
$\boldsymbol{x}\in\boldsymbol{X}$.
\end{definition}

\noindent Clean maps are total by the assumptions of being closing
and exhaustive. They are well-defined as $\boldsymbol{f}(\boldsymbol{x})$
is independent of the choice of representative for $\boldsymbol{x}$:
If $x'\in\boldsymbol{x}$, then $x'\otimes\Sigma{\scriptstyle \Gamma}$
and $x\otimes\Sigma{\scriptstyle \Gamma}$ are modally equivalent
and hence define the same point in $\mathbf{X}$. The latter follows
as multi-pointed action models applied using product update preserve
bisimulation \cite{BaltagMoss2004}, which implies modal equivalence.
Clean maps moreover play nicely with the Stone topology:
\begin{proposition}
\label{prop:cont}Let $\boldsymbol{f}$ be a clean map on an \textup{$\boldsymbol{\mathcal{L}}_{\Lambda}$}
modal space $\boldsymbol{X}$. Then $\boldsymbol{f}$ is continuous
with respect to the Stone topology of $\Lambda$.
\end{proposition}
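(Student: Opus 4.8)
The plan is to prove continuity using the topological characterization via basic open sets. Since the Stone topology $\mathcal{T}_S$ has as a basis the sets $U_{\boldsymbol{\varphi}} = \{\boldsymbol{x} \in \boldsymbol{X} : x \vDash \varphi\}$ for $\boldsymbol{\varphi} \in \boldsymbol{\mathcal{L}}_\Lambda$, it suffices to show that the preimage $\boldsymbol{f}^{-1}(U_{\boldsymbol{\varphi}})$ is open for every basic open $U_{\boldsymbol{\varphi}}$. This is the standard reduction: a map is continuous if and only if the preimage of every element of a subbasis (hence basis) is open. So first I would fix an arbitrary formula $\varphi$ and analyze the set $\boldsymbol{f}^{-1}(U_{\boldsymbol{\varphi}}) = \{\boldsymbol{x} \in \boldsymbol{X} : \boldsymbol{f}(\boldsymbol{x}) \in U_{\boldsymbol{\varphi}}\} = \{\boldsymbol{x} : x \otimes \Sigma{\scriptstyle \Gamma} \vDash \varphi\}$.

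The key technical device is reduction axioms for action models with postconditions. The central fact I would invoke is that for product update there is a formula transformer, a \emph{weakest precondition} operation, such that $x \otimes \Sigma{\scriptstyle \Gamma} \vDash \varphi$ if and only if $x \vDash \chi$ for some formula $\chi$ computable from $\varphi$ and $\Sigma{\scriptstyle \Gamma}$. Concretely, for a deterministic and exhaustive action model one can express satisfaction of $\varphi$ after the update as a finite disjunction, over the designated actions $\sigma \in \Gamma$, of the conjunction of $pre(\sigma)$ with the syntactic substitution encoding how $\varphi$ is evaluated through action $\sigma$; here postconditions are handled by the standard substitution on atoms, modalities are handled using the accessibility structure $\mathsf{R}_i$ of the action model, and the recursion terminates because $\varphi$ has finite modal depth and $\llbracket\Sigma\rrbracket$ interacts with each modal step only finitely. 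This yields a formula $\psi_\varphi$ with $x \otimes \Sigma{\scriptstyle \Gamma} \vDash \varphi \iff x \vDash \psi_\varphi$, so that $\boldsymbol{f}^{-1}(U_{\boldsymbol{\varphi}}) = U_{\boldsymbol{\psi_\varphi}}$, which is basic open by definition. Because $\boldsymbol{f}$ is well-defined on equivalence classes (as already established in the excerpt), $U_{\boldsymbol{\psi_\varphi}}$ is a well-defined subset of the modal space, and continuity follows.

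The main obstacle I expect is ensuring the precondition-finiteness hypothesis is genuinely used so that $\psi_\varphi$ is a legitimate single formula of $\mathcal{L}$ rather than an infinite disjunction. The action model $\Sigma{\scriptstyle \Gamma}$ may have countably infinitely many actions, so the naive reduction quantifying over all of $\llbracket\Sigma\rrbracket$ would produce an infinitary formula, which is not in $\mathcal{L}$ and would not define a basic open set. The hypothesis that $\{\boldsymbol{pre(\sigma)} : \sigma \in \llbracket\Sigma\rrbracket\}$ is finite is precisely what lets me collapse actions with $\Lambda$-equivalent preconditions and bound the branching in the recursive construction, keeping $\psi_\varphi$ finite. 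I would therefore carry out the reduction carefully by induction on the structure (or modal depth) of $\varphi$, tracking at each modal step that only finitely many precondition-classes are relevant, and verify that the $\top$-or-conjunctive-clause restriction on postconditions keeps the atomic substitution step well-defined. Verifying this finiteness bookkeeping is the genuinely delicate part; the outer argument that preimages of basic opens are open is then immediate.
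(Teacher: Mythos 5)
Your route is genuinely different from the paper's. The paper proves \emph{uniform} continuity in the $\varepsilon$--$\delta$ sense, working with a metric $d_{w}\in\mathcal{D}_{\boldsymbol{X}}$ that induces the Stone topology (Proposition \ref{prop:metricsStone}): a covering lemma supplies, for each $\epsilon>0$, finitely many formulas $\chi_{1},\ldots,\chi_{l}$ whose truth sets cover $\boldsymbol{X}$ and have $d_{w}$-diameter below $\epsilon$, and the core of the proof is a function $\boldsymbol{\delta}:\mathcal{L}\rightarrow(0,\infty)$ such that $f(x)\vDash\varphi$ and $d_{w}(x,y)<\boldsymbol{\delta}(\varphi)$ imply $f(y)\vDash\varphi$. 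You instead work directly with the topology: exhibit a formula $\psi_{\varphi}$ with $\boldsymbol{f}^{-1}(U_{\boldsymbol{\varphi}})=U_{\boldsymbol{\psi_{\varphi}}}$, so that preimages of basic clopens are basic clopens. This is the reduction-axiom (weakest-precondition) method, cf. \cite{Benthem2006_com-change}; it avoids metrics altogether, and, if carried out, yields a statement at least as strong as continuity.

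The gap lies exactly where you locate the difficulty, and the mechanism you propose there does not work. To keep $\psi_{\varphi}$ finite you suggest collapsing actions with $\Lambda$-equivalent preconditions. That identification is unsound: two actions with the same precondition class may have different postconditions and different $\mathsf{R}_{i}$-successor sets, and both enter the reduction. For instance, with $pre(\tau)=pre(\tau')=\top$, $post(\tau)=p$ and $post(\tau')=\neg p$, the weakest precondition of $p$ under $\tau$ is $\top$ while under $\tau'$ it is $\bot$; likewise, precondition-equivalent actions with different outgoing edges give inequivalent reductions of $\square_{i}$-formulas. Hence, when $\left\llbracket \Sigma\right\rrbracket$ is countably infinite (which the definition of clean maps permits), the conjunction $\bigwedge_{\tau:\sigma\mathsf{R}_{i}\tau}\square_{i}(\cdots)$ arising at each modal step is not made finite by your collapse, and your recursion does not return a formula of $\mathcal{L}$. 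What does suffice is a finer, inductively defined collapse: fixing the finite set $P$ of atoms occurring in $\varphi$, let $T_{0}(\tau)$ record the precondition class of $\tau$ together with the effect of $post(\tau)$ on $P$, and let $T_{m+1}(\tau)$ record $T_{0}(\tau)$ together with, for each $i\in I$, the \emph{set} of $T_{m}$-types of $\tau$'s $\mathsf{R}_{i}$-successors. Precondition finiteness plus finiteness of $P$ and $I$ give finitely many $T_{m}$-types for every $m$, and an induction on modal depth shows that the reduction of a depth-$\leq m$ formula under $\tau$ depends, up to $\Lambda$-equivalence, only on $T_{m}(\tau)$; only with this does the infinitary conjunction collapse to a finite one. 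A second, smaller hole: $\Gamma$ may itself be infinite, so your outer ``finite disjunction over $\sigma\in\Gamma$'' also needs an argument; here determinism together with precondition finiteness yields that at most one designated action per precondition class is executable anywhere on $X$, so finitely many disjuncts suffice. With these two repairs your argument goes through; as written, its central finiteness step rests on an invalid identification.
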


\begin{proof}
[sketch] We defer to \cite{BigTech} for details, but offer a sketch:
The map $\boldsymbol{f}$ is shown uniformly continuous using the
$\varepsilon$-$\delta$ formulation of continuity. The proof relies
on a lemma stating that for every $d_{w}\in\mathcal{D}_{\boldsymbol{X}}$
and every $\epsilon>0$, there are formulas $\chi_{1},\ldots,\chi_{l}\in\mathcal{L}$
such that every $x\in X$ satisfies some $\chi_{i}$ and whenever
$y\vDash\chi_{i}$ and $z\vDash\chi_{i}$ for some $i\leq l$, then
$d_{w}(y,z)<\epsilon$. The main part of the proof establishes the
claim that there is a function $\boldsymbol{\delta}:\mathcal{L}\rightarrow(0,\infty)$
such that for any $\varphi\in\mathcal{L}$, if $f(x)\vDash\varphi$
and $d_{\vec{a}}(x,y)<\boldsymbol{\delta}(\varphi)$, then $f(y)\vDash\varphi$.
Setting $\delta=\min\{\boldsymbol{\delta}(\chi_{i})\colon i\leq l\}$
then yields a $\delta$ with the desired property.\qed
\end{proof}

With Proposition \ref{prop:cont}, we are positioned to state our
main theorem:
\begin{theorem}
\label{thm:DynSys}Let $\boldsymbol{f}$ be a clean map on a saturated
\textup{\emph{$\boldsymbol{\mathcal{L}}_{\Lambda}$ modal space }}$\boldsymbol{X}$
with $\Lambda$ compact and let $d\in\mathcal{D}_{\boldsymbol{X}}$.
Then $(\boldsymbol{X}_{d},\boldsymbol{f})$ is a \textbf{topological
dynamical system}.
\end{theorem}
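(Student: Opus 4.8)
The plan is to recognize that a \textbf{topological dynamical system} is, by the standard definition, a nonempty compact metric space together with a continuous self-map, and that every ingredient needed has already been secured in the preceding propositions. The proof will therefore be a short assembly argument rather than anything requiring fresh estimates.

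First I would fix $d\in\mathcal{D}_{\boldsymbol{X}}$ and invoke Proposition \ref{prop:metricsStone} to conclude that $d$ is genuinely a metric on $\boldsymbol{X}$ and that the metric topology $\mathcal{T}_d$ it induces coincides with the Stone topology $\mathcal{T}_S$ of $\Lambda$. This identifies the metric space $\boldsymbol{X}_d$ with the topological space $(\boldsymbol{X},\mathcal{T}_S)$, so that every topological fact established for $\mathcal{T}_S$ transports verbatim to $\boldsymbol{X}_d$.

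Next I would establish compactness of the underlying space. Since the hypotheses supply exactly that $\Lambda$ is compact and $\boldsymbol{X}$ is saturated, Proposition \ref{prop:Stone} yields that $(\boldsymbol{X},\mathcal{T}_S)$, and hence $\boldsymbol{X}_d$, is compact (indeed also Hausdorff and totally disconnected, though only compactness is demanded by the definition). Nonemptiness is immediate, as $\boldsymbol{X}$ is the modal space of a nonempty set of pointed Kripke models on which the total clean map $\boldsymbol{f}$ acts. For continuity I would then appeal to Proposition \ref{prop:cont}: a clean map is continuous with respect to $\mathcal{T}_S$, and because $\mathcal{T}_d=\mathcal{T}_S$, the same $\boldsymbol{f}$ is continuous as a self-map of the metric space $\boldsymbol{X}_d$.

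Combining the three observations --- that $\boldsymbol{X}_d$ is a nonempty compact metric space and that $\boldsymbol{f}\colon\boldsymbol{X}_d\rightarrow\boldsymbol{X}_d$ is continuous --- gives precisely the definition of a topological dynamical system, completing the proof. The only point demanding any care, and the closest thing to an obstacle, is confirming that the theorem's hypotheses are exactly those that trigger the compactness clause of Proposition \ref{prop:Stone}; once the identity $\mathcal{T}_d=\mathcal{T}_S$ from Proposition \ref{prop:metricsStone} is in hand, the remaining steps reduce to citing the right earlier result.
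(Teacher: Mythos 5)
Your proposal is correct and matches the paper's own proof, which likewise assembles Propositions \ref{prop:Stone}, \ref{prop:metricsStone} and \ref{prop:cont} to conclude that $\boldsymbol{X}_{d}$ is a compact metric space on which $\boldsymbol{f}$ is continuous (the paper also cites Proposition \ref{prop:yay}, but that serves as motivation rather than a logically necessary ingredient). Your explicit remark that the hypotheses of the theorem are exactly those triggering the compactness clause of Proposition \ref{prop:Stone} is precisely the one point of care in the paper's argument as well.
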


\begin{proof}
Propositions \ref{prop:yay}, \ref{prop:Stone}, \ref{prop:metricsStone}
and \ref{prop:cont} jointly imply that $\boldsymbol{X}_{d}$ is a
compact metric space on which $\boldsymbol{f}$ is continuous, thus
satisfying the requirements of e.g. \cite{Hassellblatt2002,Vries2014,Eisner2015}.\qed
\end{proof}

With Theorem \ref{thm:DynSys}, we have, in what we consider a natural
manner, situated dynamic epistemic logic in the mathematical discipline
of dynamical systems. A core topic in this discipline is to understand
the long-term, qualitative behavior of maps on spaces. Central to
this endeavor is the concept of \emph{recurrence}, i.e., understanding
when a system returns to previous states as time goes to infinity.

\section{Recurrence in the Limit Behavior of Clean Maps\label{sec:Recurrent}}

We represent results concerning the limit behavior of clean maps on
modal spaces. In establishing the required terminology, we follow
\cite{Hassellblatt2002}: Let $f$ be a continuous map on a metric
space $X_{d}$ and $x\in X_{d}$. A point $y\in X$ is a \textbf{limit
point}\footnote{Or $\omega$-limit point. The $\omega$ is everywhere omitted as time
here only moves forward.} for $x$ under $f$ if there is a strictly increasing sequence $n_{1},n_{2},...$
such that the subsequence $f^{n_{1}}(x),f^{n_{2}}(x),...$ of $(f^{n}(x))_{n\in\mathbb{N}_{0}}$
converges to $y$. The \textbf{limit set }of $x$ under $f$ is the
set of all limit points for $x$, denoted $\omega_{f}(x)$. Notably,
$\omega_{f}(x)$ is closed under $f$: For $y\in\omega_{f}(x)$ also
$f(y)\in\omega_{f}(x)$. We immediately obtain that any modal system
satisfying Theorem \ref{thm:DynSys} has a nonempty limit set:
\begin{proposition}
\label{prop:Comp->LimitSet} Let $(\boldsymbol{X}_{d},\boldsymbol{f})$
be as in Theorem \ref{thm:DynSys}. For any point $\boldsymbol{x}\in\boldsymbol{X}$,
the limit set of $\boldsymbol{x}$ under $\boldsymbol{f}$ is non-empty.
\end{proposition}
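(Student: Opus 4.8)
The plan is to exploit the compactness established in Theorem~\ref{thm:DynSys} together with the fact that, for metric spaces, compactness coincides with \emph{sequential compactness}. By Theorem~\ref{thm:DynSys}, $\boldsymbol{X}_d$ is a compact metric space, and hence sequentially compact: every sequence of points in $\boldsymbol{X}_d$ admits a subsequence converging to a point of $\boldsymbol{X}$. This is the only external fact I would need, and it is standard (see e.g. \cite{Munkres}).

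First I would fix $\boldsymbol{x}\in\boldsymbol{X}$ and consider its orbit, the sequence $(\boldsymbol{f}^{n}(\boldsymbol{x}))_{n\in\mathbb{N}_{0}}$. Since clean maps are total, $\boldsymbol{f}$ maps $\boldsymbol{X}$ into itself, so this is a well-defined sequence of points lying in the compact metric space $\boldsymbol{X}_d$.

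Next, by sequential compactness I would extract a convergent subsequence: there exist a strictly increasing sequence of indices $n_{1},n_{2},\ldots$ and a point $\boldsymbol{y}\in\boldsymbol{X}$ such that $\boldsymbol{f}^{n_{1}}(\boldsymbol{x}),\boldsymbol{f}^{n_{2}}(\boldsymbol{x}),\ldots$ converges to $\boldsymbol{y}$ in $\boldsymbol{X}_d$. By the definition of limit point given immediately before the proposition, this is exactly what it means for $\boldsymbol{y}$ to belong to $\omega_{\boldsymbol{f}}(\boldsymbol{x})$, so $\omega_{\boldsymbol{f}}(\boldsymbol{x})\neq\emptyset$.

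There is essentially no hard step here: the result falls out directly once compactness is recast as sequential compactness. The only point genuinely worth flagging is that the subsequence produced by sequential compactness is indexed by a strictly increasing sequence of naturals, which matches the stated definition of limit point verbatim, making membership in $\omega_{\boldsymbol{f}}(\boldsymbol{x})$ immediate rather than requiring any further argument.
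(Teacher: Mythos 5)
Your proposal is correct and follows exactly the paper's own argument: the paper's proof is simply the observation that, by compactness of $\boldsymbol{X}_d$, every sequence (in particular the orbit $(\boldsymbol{f}^{n}(\boldsymbol{x}))_{n\in\mathbb{N}_{0}}$) has a convergent subsequence, citing \cite[Thm.~28.2]{Munkres}. You have merely spelled out the same appeal to sequential compactness in more detail, so there is nothing to add.
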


\begin{proof}
Since $\boldsymbol{X}$ is is compact, every sequence in $\boldsymbol{X}$
has a convergent subsequence, cf. e.g. \cite[Thm. 28.2]{Munkres}.
\end{proof}

Proposition \ref{prop:Comp->LimitSet} does not inform us of the \emph{structure
}of said limit set. In the study of dynamical systems, such structure
is often sought through classifying the possible repetitive behavior
of a system, i.e., through the system's \emph{recurrence} properties.
For such studies, a point $x$ is called (positively) \textbf{recurrent
}if ${x\in\omega_{f}(x)}$, i.e., if it is a limit point of itself.
\medskip{}

The simplest structural form of recurrence is \emph{periodicity}:
For a point $x\in X$, call the set $\mathcal{O}_{f}(x)=\{f^{n}(x)\colon n\in\mathbb{N}_{0}\}$
its \textbf{orbit}. The orbit $\mathcal{O}_{f}(x)$ is \textbf{periodic}
if $f^{n+k}(x)=f^{n}(x)$ for some $n\geq0,k>0$; the least such $k$
is the \textbf{period} of $\mathcal{O}_{f}(x)$. Periodicity is thus
equivalent to $\mathcal{O}_{f}(x)$ being finite. Related is the notion
of a \textbf{limit cycle}:\textbf{ }a periodic orbit $\mathcal{O}_{f}(x)$
is a limit cycle if it is the limit set of some $y$ not in the period,
i.e., if $\mathcal{O}_{f}(x)=\omega_{f}(y)$ for some $y\not\in\mathcal{O}_{f}(x)$. 

\medskip{}

It was conjectured by van Benthem that certain clean maps\textemdash those
based on finite action models and without postconditions\textemdash would,
whenever applied to a finite $x$, have a periodic orbit $\mathcal{O}_{f}(x)$.
I.e., after finite iterations, the map would oscillate between a finite
number of states. This was the content of van Benthem's ``\emph{Finite
Evolution Conjecture}'' \cite{Benthem_OneLonely}\emph{. }The conjecture
was refuted using a counterexample by Sadzik in his 2006 paper, \cite{Sadzik2006}.\footnote{We paraphrase van Benthem and Sadzik using the terminology introduced.}
The example provided by Sadzik (his Example 33) uses an action model
with only Boolean preconditions. Interestingly, the orbit of the corresponding
clean map terminates in a limit cycle. This is a corollary to Proposition
\ref{Niceprop} below.\medskip{}

Before we can state the proposition, we need to introduce some terminology.
Call a multi-pointed action model $\Sigma{\scriptstyle \Gamma}$ \textbf{finite}
if $\left\llbracket \Sigma\right\rrbracket $ is finite, \textbf{Boolean
}if $pre(\sigma)$ is a Boolean formula for all $\sigma\in\left\llbracket \Sigma\right\rrbracket $,
and \textbf{static} if $post(\sigma)=\top$ for all $\sigma\in\left\llbracket \Sigma\right\rrbracket $.
We apply the same terms to a clean map $\boldsymbol{f}$ based on
$\Sigma{\scriptstyle \Gamma}$. In this terminology, Sadzik showed
that for any finite, Boolean, and static clean map $\boldsymbol{f}\colon\boldsymbol{X}\rightarrow\boldsymbol{X}$,
if the orbit $\mathcal{O}_{\boldsymbol{f}}(\boldsymbol{x})$ is periodic,
then it has period $1$.\footnote{See \cite{Bolander2015} for an elegant and generalizing exposition.}
This insightful result immediates the following:
\begin{proposition}
\label{Niceprop} Let $(\boldsymbol{X}_{d},\boldsymbol{f})$ be as
in Theorem \ref{thm:DynSys} with $\boldsymbol{f}$ finite, Boolean,
and static. For all $\boldsymbol{x}\in\boldsymbol{X}$, the orbit
$\mathcal{O}_{\boldsymbol{f}}(\boldsymbol{x})$ is periodic with period
$1$.
\end{proposition}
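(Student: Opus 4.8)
The plan is to obtain Proposition~\ref{Niceprop} from Sadzik's theorem together with the compactness supplied by Theorem~\ref{thm:DynSys}. Sadzik's statement is conditional: it bounds the period of an orbit \emph{provided} the orbit is periodic. The argument therefore divides into two tasks. First, show that the orbit $\mathcal{O}_{\boldsymbol{f}}(\boldsymbol{x})$ is periodic, i.e.\ finite, so that $\boldsymbol{f}^{\,n+1}(\boldsymbol{x})=\boldsymbol{f}^{\,n}(\boldsymbol{x})$ for some finite $n$. Second, apply Sadzik's theorem to conclude that the period equals $1$. The second task is immediate once the first is done; the substance of the proof lies in establishing periodicity of the orbit itself.

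For the structural preliminaries I would first record a consequence of the \emph{Boolean} and \emph{static} hypotheses. Because every precondition is Boolean and every postcondition is $\top$, product update leaves the valuation at the designated state untouched, so the propositional type of the distinguished point is invariant along the orbit. Since $\Sigma{\scriptstyle \Gamma}$ is deterministic and exhaustive over $X$ and its preconditions are Boolean, the designated action selected at a point depends only on that point's Boolean type; invariance of the type then forces \emph{one and the same} action $\sigma\in\Gamma$ to be executed at every iteration. Thus the orbit is generated by iterating product update with a single fixed action, and the analysis reduces to tracking how the submodel reachable from the designated state evolves under this restricted dynamics.

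Next I would bring in finiteness and compactness to force the orbit to close up. As $\boldsymbol{f}$ is finite and Boolean, the preconditions distinguish only finitely many Boolean types, which uniformly bounds the data the update can introduce at each modal depth. The aim is to show, by induction on depth $k$, that the $k$-bisimulation type of $\boldsymbol{f}^{n}(\boldsymbol{x})$ stabilizes as $n$ grows and that these stabilizations are attained simultaneously after finitely many steps. Compactness of $\boldsymbol{X}_d$ (Theorem~\ref{thm:DynSys}) together with non-emptiness of the limit set (Proposition~\ref{prop:Comp->LimitSet}) already supplies a convergent subsequence $\boldsymbol{f}^{n_j}(\boldsymbol{x})\to\boldsymbol{y}$; using the single-action structure, the goal is to upgrade this to the statement that the orbit itself becomes constant after finitely many updates.

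The main obstacle is precisely this last step: passing from subsequential convergence to periodicity of the orbit. Compactness by itself yields only a non-empty limit set, and continuity of $\boldsymbol{f}$ does not in general promote a convergent subsequence to a finite orbit. The crux is to exploit the finite Boolean static structure---through the constancy of the executed action $\sigma$ and the uniform bound on distinguishable Boolean types---to show that the reachable submodels stop changing up to modal equivalence after finitely many updates, collapsing the \emph{a priori} infinite orbit to a finite one. Once periodicity is secured, Sadzik's theorem applies verbatim and pins the period at $1$, completing the proof.
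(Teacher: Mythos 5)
You correctly recognize that the paper's paraphrase of Sadzik's theorem is conditional (``if periodic, then period $1$''), and your plan is therefore to first prove that the orbit $\mathcal{O}_{\boldsymbol{f}}(\boldsymbol{x})$ is finite and then invoke Sadzik. The problem is that this first step --- which you yourself flag as ``the crux'' and leave unresolved --- is not a technical difficulty awaiting a cleverer use of the fixed designated action or an induction on modal depth: it is false under the proposition's hypotheses. Sadzik's own Example 33, discussed in the paper immediately before the proposition, is a finite, Boolean, static clean map whose orbit (even from a finite initial model) never repeats; that is exactly how it refuted van Benthem's Finite Evolution Conjecture. The paper is explicit that Proposition~\ref{Niceprop} is to be read as ``forgoing the requirement of reaching the limit set in finite time'': what its proof actually establishes is that the orbit \emph{converges} in the Stone topology to a fixed point of $\boldsymbol{f}$ --- equivalently, that $\omega_{\boldsymbol{f}}(\boldsymbol{x})$ is a single fixed point, a limit cycle of period $1$ --- and not that $\boldsymbol{f}^{n+1}(\boldsymbol{x})=\boldsymbol{f}^{n}(\boldsymbol{x})$ for some finite $n$. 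So you set out to prove a strictly stronger statement than the one that holds, and no completion of your outline can exist. (Your preliminary observations --- the Boolean type of the designated point is invariant, hence one and the same designated action fires at every iteration --- are correct, but they cannot collapse the orbit to a finite set, because nothing does.)

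The paper's proof runs in the opposite direction and never needs finiteness of the orbit. By Proposition~\ref{prop:Comp->LimitSet} (compactness of $\boldsymbol{X}_{d}$ from Theorem~\ref{thm:DynSys}), the limit set $\omega_{\boldsymbol{f}}(\boldsymbol{x})$ is non-empty; Sadzik's result is then invoked to conclude that this limit set contains a single point; since a sequence in a compact metric space with a unique subsequential limit converges to it, the whole orbit $(\boldsymbol{f}^{n}(\boldsymbol{x}))_{n\in\mathbb{N}_{0}}$ converges to that point; and since $\omega_{\boldsymbol{f}}(\boldsymbol{x})$ is closed under $\boldsymbol{f}$, the point is a fixed point. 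In other words, Sadzik's theorem is applied to pin down the \emph{limit set}, and compactness performs precisely the upgrade from subsequential to full convergence that your outline demanded but could not deliver; periodicity in the literal, finite-orbit sense is never claimed in the argument. If you replace your goal ``the orbit is finite'' by ``the limit set is a singleton'' and finish with the $\boldsymbol{f}$-invariance of the limit set, your remaining pieces reassemble into the paper's proof.
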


\begin{proof}
By Prop. \ref{prop:Comp->LimitSet}, the limit set $\omega_{\boldsymbol{f}}(\boldsymbol{x})$
of $\boldsymbol{x}$ under $\boldsymbol{f}$ is non-empty. Sadzik's
result shows that it contains a single point. Hence $(\boldsymbol{f}^{n}(\boldsymbol{x}))_{n\in\mathbb{N}_{0}}$
converges to this point. As the limit set $\omega_{\boldsymbol{f}}(\boldsymbol{x})$
is closed under $\boldsymbol{f}$, its unique point is a fix-point.\qed
\end{proof}

Proposition \ref{Niceprop} may be seen as a partial vindication of
van Benthem's conjecture: Forgoing the requirement of reaching the
limit set in finite time and the possibility of modal preconditions,
the conjecture holds, even if the initial state has an infinite set
of worlds $\left\llbracket x\right\rrbracket $. This simple recurrent
behavior is, however, not the general case. More complex clean maps
may exhibit \textbf{nontrivial recurrence}, i.e., produce non-periodic
orbits with recurrent points:
\begin{proposition}
\label{prop:NonBool-NoPost}There exist finite, static, but non-Boolean,
clean maps that exhibit nontrivial recurrence.
\end{proposition}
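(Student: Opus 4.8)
The plan is to prove Proposition~\ref{prop:NonBool-NoPost} by constructing an explicit finite, static, non-Boolean clean map whose orbit on some point is non-periodic yet contains a recurrent point. Since Proposition~\ref{Niceprop} closes off the Boolean static case (every periodic orbit has period $1$), the construction must crucially exploit modal preconditions to generate genuinely non-trivial long-run dynamics. First I would fix a very simple language---say a single agent and a single atom $p$, or even just $p$ with a reflexive/serial modality---so that the characteristic formulas $\square^{n}\top$, $\square^{n}p$, and the like give us fine control over $n$-step modal structure. The idea is to build an action model whose preconditions are modal formulas measuring ``depth'' (e.g. formulas distinguishing how many times $\square$ can be iterated before hitting $\bot$), so that one application of $\boldsymbol{f}$ shifts or peels this modal depth in a controlled way.

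The concrete target is to realize a shift-like dynamics on an infinite object. Here the plan is to let the initial model $x$ encode an infinite sequence---for instance a point whose successive modal depths realize a fixed aperiodic binary pattern---and to design $\Sigma{\scriptstyle \Gamma}$, using modal preconditions to read off the ``current'' symbol and act accordingly, so that $\boldsymbol{f}$ implements the shift on that sequence. With an appropriate choice of aperiodic-but-recurrent pattern (a Sturmian or Toeplitz-type sequence, whose shift orbit closure is infinite yet in which the initial point is recurrent), the orbit $\mathcal{O}_{\boldsymbol{f}}(\boldsymbol{x})$ is then non-periodic (so Proposition~\ref{Niceprop}'s conclusion genuinely fails, confirming the map is non-Boolean), while $\boldsymbol{x}\in\omega_{\boldsymbol{f}}(\boldsymbol{x})$ because the shift returns arbitrarily close to the starting configuration infinitely often. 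Recurrence is then witnessed by exhibiting a strictly increasing sequence of return times $n_{1},n_{2},\dots$ along which $\boldsymbol{f}^{n_{j}}(\boldsymbol{x})$ agrees with $\boldsymbol{x}$ on longer and longer initial segments of the modal depth, hence logically converges to $\boldsymbol{x}$ by Proposition~\ref{prop:yay}.

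Throughout I must verify that the candidate $\Sigma{\scriptstyle \Gamma}$ is actually a clean map: it must be \emph{finite} (so only finitely many action states, which forces the modal preconditions to be drawn from a fixed finite set---hence automatically \emph{precondition finite}), \emph{static} ($post(\sigma)=\top$), and \emph{closing}, \emph{deterministic}, \emph{exhaustive} over the relevant $X$. Determinism and exhaustivity amount to checking that the finitely many modal preconditions partition the models in the orbit, which the depth-measuring formulas should be engineered to guarantee; closure requires that the shifted model again lies in $X$, which one secures by taking $X$ to be the orbit (together with its limit points) and checking it is $\otimes$-invariant. The map being finite and static but \emph{non}-Boolean is exactly the content of the proposition, and the non-Booleanness is forced precisely because Proposition~\ref{Niceprop} would otherwise collapse the dynamics to period $1$.

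The main obstacle I anticipate is reconciling three competing demands simultaneously: the action model must be \emph{finite}, yet produce an \emph{aperiodic} orbit, yet keep that orbit \emph{recurrent}. A crude shift design tends either to run off to a fixed point (periodic, period $1$) or to ``forget'' enough structure that finitely many modal preconditions can no longer determine the next action deterministically over the whole orbit. The delicate part is choosing the encoding of the infinite pattern into Kripke structure so that (i) a bounded-depth modal precondition suffices to decide the next action at every step of the orbit---this is what keeps $\Sigma{\scriptstyle \Gamma}$ finite and deterministic---while (ii) the global pattern remains aperiodic and recurrent. I expect that citing Sadzik's Example~33 (already noted in the excerpt as yielding a limit \emph{cycle}) as a stepping stone is not enough, since that example is Boolean and terminates in periodic behaviour; the genuinely new work is the modal precondition design that breaks periodicity without sacrificing finiteness or determinism, and verifying recurrence will require a careful bookkeeping argument on return times rather than a one-line appeal to compactness.
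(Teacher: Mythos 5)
Your overall strategy is genuinely different from the paper's, and its dynamical core is sound: a Sturmian or Toeplitz point is recurrent but not periodic under the shift, so \emph{if} the shift on suitably encoded sequences were a finite, static, non-Boolean clean map, nontrivial recurrence would follow. The paper instead imports its recurrence from the binary odometer: by Lemma \ref{lem:Any-Turing} it emulates, one Turing step per application of $\boldsymbol{f}$, a machine that perpetually increments a counter in mirrored base-$2$; the configurations with the head back at the start cell form a subsequence of the orbit in which every finite block of low-order digits recurs infinitely often, which simultaneously yields recurrence of every such configuration, aperiodicity of the orbit, and, as a bonus, uncountably many limit points.

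The genuine gap in your proposal is the realization step, and it is precisely the step the paper's route is engineered to avoid. Each application of the paper's clean map is a single Turing step, i.e.\ a bounded, local change near the read-write head; that is exactly what Lemma \ref{lem:Any-Turing} supplies, and the interesting dynamics emerges only along a subsequence of the orbit. Your map must instead, in \emph{one} application of product update, rewrite the content of every cell simultaneously (cell $i$ acquires the symbol of cell $i+1$). No appeal to Lemma \ref{lem:Any-Turing} can give this---no Turing machine shifts an infinite tape in finitely many steps---so you need a new encoding lemma: that the shift rule is realizable by a finite, static, multi-pointed action model that is deterministic, exhaustive and closing over the orbit closure of your encoded point. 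That claim carries the entire weight of the proof, and your sketch leaves it open while acknowledging it is ``the delicate part.'' The difficulties are concrete: since $post(\sigma)=\top$, valuations at surviving states cannot change, so cell content must be encoded by attachable and removable auxiliary states; product update can create such a state only by duplicating an existing one (several actions applying to the same state, exploiting reflexive edges), can delete one only by making every precondition fail at it, and those preconditions are evaluated \emph{at} the auxiliary state, which must therefore see its own cell and the neighbouring cell's content through the relational structure; moreover the product relation ($(s,\sigma)R'(t,\tau)$ iff $sR t$ and $\sigma\mathsf{R}\tau$) can only remove, never add, edges, and the new designated state is forced to be a copy of the old one, so you must argue modal equivalence with the shifted model rather than identity. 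Each obstacle is plausibly solvable in the style of \cite{KleinRendsvigTuring,BolanderBirkegaard2011}, but solving them amounts to a construction comparable in size to those papers, not the bookkeeping verification you defer it as; as it stands, your argument proves the proposition modulo exactly its hardest step.
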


\begin{proposition}
\label{prop:Bool-Post}There exist finite, Boolean, but non-static,
clean maps that exhibit nontrivial recurrence.
\end{proposition}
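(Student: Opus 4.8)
The plan is to prove the proposition by exhibiting an explicit witness: a finite action model $\Sigma\Gamma$ with Boolean preconditions and at least one non-trivial postcondition whose induced clean map $\boldsymbol{f}$ has, for some initial point $\boldsymbol{x}$, an orbit $\mathcal{O}_{\boldsymbol{f}}(\boldsymbol{x})$ that is infinite (hence non-periodic) while $\boldsymbol{x}\in\omega_{\boldsymbol{f}}(\boldsymbol{x})$ (hence recurrent). Conceptually I would aim at the dynamics of a minimal system such as a dyadic odometer or a Sturmian (e.g. Fibonacci) shift: in such systems every point is uniformly recurrent and no point is periodic, so both halves of ``nontrivial recurrence'' come for free once the dynamics is realized as a clean map. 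Two routes are available. The direct route builds such a system from scratch; the indirect route starts from the non-Boolean static witness guaranteed by Proposition \ref{prop:NonBool-NoPost} and compiles its modal preconditions away.

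For the direct route I would encode a configuration of the target system as the valuation of an infinite pointed Kripke model --- for instance a chain $w_0\to w_1\to\cdots$ in which a value atom records the bit at each world and a marker atom records the position of a read/write head --- and let $\Sigma\Gamma$ perform one elementary step per application. The essential use of postconditions is to rewrite the bit values and relocate the marker, so that the information a modal precondition would have inspected is instead recorded locally as a fresh atom and then read off by a purely propositional precondition on the next application. Since the action model is finite and its preconditions mention only finitely many atoms, finitely many marker atoms suffice, keeping $\Sigma\Gamma$ finite and Boolean while forcing it to be non-static. The routine verification then checks that $\Sigma\Gamma$ is precondition-finite, and closing, deterministic and exhaustive over $X$, so that $\boldsymbol{f}$ is a well-defined clean map to which Theorem \ref{thm:DynSys} and Proposition \ref{prop:Comp->LimitSet} apply, and that exactly one designated action fires at each point, so that the image stays single-pointed and the intended configuration is recovered up to bisimulation (i.e. as a genuine point of the modal space).

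The recurrence claim is where the real work lies. Having fixed $\boldsymbol{x}$ to encode a uniformly recurrent aperiodic configuration, I would establish $\boldsymbol{x}\in\omega_{\boldsymbol{f}}(\boldsymbol{x})$ by producing a strictly increasing sequence $n_1,n_2,\dots$ along which $\boldsymbol{f}^{n_j}(\boldsymbol{x})$ agrees with $\boldsymbol{x}$ on ever larger finite families of $\Lambda$-propositions; for $d\in\mathcal{D}_{\boldsymbol{X}}$ this is exactly $d(\boldsymbol{f}^{n_j}(\boldsymbol{x}),\boldsymbol{x})\to 0$, i.e. convergence of the subsequence to $\boldsymbol{x}$ in the Stone topology via Propositions \ref{prop:yay} and \ref{prop:metricsStone}, while non-periodicity follows from aperiodicity of the encoded configuration since distinct configurations occupy distinct points. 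I expect the main obstacle to be precisely the tension at the heart of the statement: Boolean preconditions are propositionally local and cannot, within a single application, carry out the non-local information flow (the carry propagation of the odometer, or the ``is my predecessor the head?'' test of the shift) that a modal precondition would supply. The design must therefore route this information either through the action-model accessibility relation inside the product update or across successive iterations, using postconditions as memory, and one must verify that doing so spawns no spurious branches that would alter the modal theory of the resulting point. For the indirect route the analogous obstacle is to prove that compiling modal preconditions into postcondition-written flags is a topological conjugacy --- or at least carries recurrent aperiodic orbits to recurrent aperiodic orbits --- between the non-Boolean static map of Proposition \ref{prop:NonBool-NoPost} and the Boolean non-static map built here.
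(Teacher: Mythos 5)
Your direct route is, at its core, the construction the paper itself uses: the paper proves Propositions \ref{prop:NonBool-NoPost} and \ref{prop:Bool-Post} together by taking a Turing machine that perpetually applies the successor function to a number written in mirrored base-2 --- exactly your dyadic odometer, restricted to finitely supported configurations --- and invoking Lemma \ref{lem:Any-Turing} (proved in companion papers) to realize it as a finite, Boolean, non-static clean map, one machine step per application of $\boldsymbol{f}$, with postconditions playing precisely the role of local memory you describe. Recurrence is then obtained as you sketch: along the subsequence where an addition has just been completed, agreement with a given configuration on all cells visible at modal depth $n$ recurs infinitely often, which gives convergence of a subsequence in the Stone topology (via Propositions \ref{prop:yay}, \ref{prop:metricsStone} and \ref{prop:nBisim-finite}), while the orbit is infinite and hence non-periodic.

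The gap lies in where you start the orbit, i.e.\ in the tacit assumption that realizing a minimal symbolic system as a clean map preserves its recurrence. Because the realization is by micro-steps, one abstract step of the odometer or shift costs a configuration-dependent number of applications of $\boldsymbol{f}$, and abstract recurrence transfers only if every abstract step terminates in finitely many micro-steps along the entire orbit of the chosen initial point. Your criteria do not guarantee this. The odometer is minimal, so \emph{every} point --- including the all-ones configuration --- is uniformly recurrent and aperiodic; yet started from all-ones the very first addition carries forever, the head never returns to the start cell, and the resulting $\boldsymbol{f}$-orbit has no recurrent point whatsoever. The Sturmian alternative fails outright: under product update with Boolean preconditions, which actions fire at a world and the posterior valuation there are functions of that world's prior valuation alone, and only valuations are readable by the next (again Boolean) application; so atom-encoded content cannot cross between worlds from one application to the next, a shift must be implemented by a head sweeping the tape, and sweeping an infinite word never terminates. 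The paper's choice of initial point dissolves all of this: starting from the finitely supported configuration $0$, every addition terminates, the map never needs to be analysed on infinite configurations (these occur only as limit points, namely the uncountably many $\boldsymbol{x}^{Z}$, $Z\subseteq\mathbb{N}$), and recurrence sits at the orbit's own points by the elementary fact that counting in mirrored binary restores the first $m$ digits every $2^{m}$ additions. If you adopt that starting point, your direct route becomes the paper's proof; the indirect compilation route is then unnecessary, since Lemma \ref{lem:Any-Turing} already provides separate Boolean/non-static and non-Boolean/static emulations of the same machine.
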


\noindent We show these propositions below, building a clean map which,
from a selected initial state, has uncountably many limit points,
despite the orbit being only countable. Moreover, said orbit also
contains infinitely many recurrent points. In fact, every element
of the orbit is recurrent. We rely on Lemma \ref{lem:Any-Turing}
in the proof. A proof of Lemma \ref{lem:Any-Turing}.1 may be found
in \cite{KleinRendsvigTuring}, a proof of Lemma \ref{lem:Any-Turing}.2
in \cite{BolanderBirkegaard2011}.
\begin{lemma}
\label{lem:Any-Turing}Any Turing machine can be emulated using a
set $X$ of $S5$ pointed Kripke models for finite atoms and a finite
multi-pointed action model $\Sigma{\scriptstyle \Gamma}$ deterministic
over $X$. Moreover, $\Sigma{\scriptstyle \Gamma}$ may be chosen
1. static, but non-Boolean, or 2. Boolean, but non-static.
\end{lemma}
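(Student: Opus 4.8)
The plan is to give a uniform \emph{configuration-to-model} encoding and then to realize the transition function of the machine as a single finite action model, treating the two cases of the ``moreover'' clause as two encodings of the same simulation. Fix a Turing machine with finite state set $Q$, finite tape alphabet $A$, and transition function $\delta$. A configuration (tape contents, head cell, current control state) is to be encoded as a pointed $S5$ model whose worlds represent the finitely many non-blank tape cells together with the scanned cell, each world carrying atoms recording the symbol stored there, whether it is the head cell, and, on the head cell, the current control state. Since $Q$ and $A$ are finite, finitely many atoms suffice, even though the tape is unbounded: rather than naming cells atomically, I would recover the sequential structure of the tape from the $S5$ relations, using two agents whose equivalence classes pair up adjacent cells in the familiar zigzag fashion, so that ``the symbol immediately to the left/right of the head'' becomes a modally expressible property. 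I would then let $X$ be the set of all such configuration encodings for the chosen machine, with the designated point marking the head cell.

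Next I would build the finite, multi-pointed action model $\Sigma{\scriptstyle \Gamma}$ whose product update performs exactly one application of $\delta$. The designated actions are indexed by pairs $(q,a)\in Q\times A$: the action for $(q,a)$ has precondition ``the head cell is in control state $q$ and scans symbol $a$,'' and its effect rewrites that cell to the symbol prescribed by $\delta$, erases the old head marker, and installs the new head marker and control state on the neighbouring cell in the prescribed direction. Determinism over $X$ is then immediate, since for any configuration exactly one pair $(q,a)$ matches the head cell, so the preconditions of distinct designated actions are pairwise $\Lambda$-inconsistent on $X$; exhaustivity and closure over $X$ follow once $X$ is closed under successor configurations and a halting sink is adjoined. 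Correctness of the simulation, namely that $x\otimes\Sigma{\scriptstyle \Gamma}$ encodes the $\delta$-successor of the configuration encoded by $x$, is then a matter of checking the product-update clauses against the encoding cell by cell.

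The two variants differ only in \emph{how} a single step is realized, and this is where the real work lies. In the static, non-Boolean case (part~1) postconditions are forbidden, so no atom may be overwritten; the rewriting of the scanned cell and the shifting of the head must instead be produced by the world-selection mechanism of product update, driven by \emph{modal} preconditions that inspect the neighbouring cells through the two $S5$ relations. In the Boolean, non-static case (part~2) the situation is reversed: preconditions may only be propositional, so no action can look along a relation to locate a neighbour, and all head movement and symbol rewriting must be carried out by \emph{postconditions} toggling the relevant atoms, with enough neighbour information stored atomically at each world that purely Boolean preconditions can still detect the head and fire deterministically. I expect the principal obstacle in both cases to be exactly this trade-off: making the information needed to compute one $\delta$-step available to the permitted resource (modal preconditions in part~1; Boolean preconditions together with postconditions in part~2) while keeping the action model finite, the encodings within $S5$ over finite atoms, and the designated preconditions mutually exclusive so that the induced clean map is deterministic. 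Detailed constructions for the two cases are carried out in \cite{KleinRendsvigTuring} and \cite{BolanderBirkegaard2011} respectively.
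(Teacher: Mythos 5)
Your high-level strategy coincides with the paper's own: the paper does not prove this lemma internally either, but defers part 1 to \cite{KleinRendsvigTuring} and part 2 to \cite{BolanderBirkegaard2011}, and the encoding you sketch (tape cells as worlds, two $S5$ relations linking adjacent cells in a zigzag, parity atoms, a finite action model realizing one $\delta$-step, with the static case working through world-selection and the Boolean case through postconditions) agrees with the sketch the paper gives around Fig.~\ref{fig:tape}. However, one concrete piece of your construction would fail, and it is the piece carrying your claims that determinism is ``immediate'' and that exhaustivity and closure ``follow'': taking the designated point to be the head cell and indexing $\Gamma$ by pairs $(q,a)$. Under product update the new designated point is forced to be $(s,\sigma)$ with $s$ the \emph{old} designated world and $\sigma\in\Gamma$, and its valuation is determined by the valuation of $s$ together with $post(\sigma)$; it cannot become the neighbouring cell onto which your action model ``installs the new head marker''. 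So after one step the designated world no longer carries the head, no precondition of a $(q,a)$-indexed designated action holds at it, and exhaustivity over any set closed under the update fails: the induced map is not total. This is exactly why the cited constructions, and the paper's figure, anchor the designated state at the immovable start cell $\triangleright$, let the designated actions' preconditions speak only about that fixed anchor, and leave the actual work of the $\delta$-step to non-designated actions, whose preconditions are evaluated world-by-world in forming $\llbracket M\Sigma\rrbracket$.

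Second, your proposed mechanism for part 2 --- ``enough neighbour information stored atomically at each world that purely Boolean preconditions can still detect the head'' --- cannot be realized with finitely many atoms. With Boolean preconditions, which events pair with a given world, and the valuation of each resulting pair, depend only on that world's own valuation; hence the cell the head is about to move onto can be atomically distinguished from a far-away cell with the same symbol and parity only if its atoms already record head-adjacency together with the head's current state and scanned symbol, and maintaining such markers across a step requires corresponding information one cell further out, and so on without bound. Any correct construction along your lines must instead tolerate spurious world-event pairs and render them harmless through the relational structure of the product (disconnection from the designated-generated submodel, respectively bisimulation collapse) --- which is precisely the nontrivial content of \cite{BolanderBirkegaard2011} that your sketch defers. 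Neither gap is fatal to the overall plan, since the references you (and the paper) cite resolve both; but as written, your middle paragraph asserts as immediate exactly the two points where the construction is delicate.
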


\begin{proof}
[of Propostitions \ref{prop:NonBool-NoPost} and \ref{prop:Bool-Post}]
For both propositions, we use a Turing machine \emph{ad infinitum
}iterating the successor function on the natural numbers. Numbers
are represented in mirrored base-2, i.e., with the \emph{leftmost}
digit the \emph{lowest}. Such a machine may be build with alphabet
$\{\triangleright,\mathtt{0},\mathtt{1},\mathtt{\sqcup}\}$, where
the symbol $\triangleright$ is used to mark the starting cell and
$\mathtt{\sqcup}$ is the \emph{blank }symbol. We omit the exact description
of the machine here. Of importance is the content of the tape: Omitting
blank ($\mathtt{\sqcup}$) cells, natural numbers are represented
as illustrated in Fig. \ref{fig1}.\vspace{-16pt} 
\begin{figure}
\begin{centering}
\begin{tabular}{c|c|c|c|c|c|c|c|c|c|c|c|c|c|c|c|c|c|c|c|c|c|c|}
\cline{2-3} \cline{5-7} \cline{9-12} \cline{14-17} \cline{19-23} 
$0:$\enskip{} & $\triangleright$ & $\mathtt{0}$ & \quad{}$2:$\enskip{} & $\triangleright$ & $\mathtt{0}$ & $\mathtt{1}$ & \quad{}$4:$\enskip{} & $\triangleright$ & $\mathtt{0}$ & $\mathtt{0}$ & $\mathtt{1}$ & \quad{}$6:$\enskip{} & $\triangleright$ & $\mathtt{0}$ & $\mathtt{1}$ & $\mathtt{1}$ & \quad{}$8:$\enskip{} & $\triangleright$ & $\mathtt{0}$ & $\mathtt{0}$ & $\mathtt{0}$ & $\mathtt{1}$\tabularnewline
\cline{2-3} \cline{5-7} \cline{9-12} \cline{14-17} \cline{19-23} 
$1:$\enskip{} & $\triangleright$ & $\mathtt{1}$ & \quad{}$3:$\enskip{} & $\triangleright$ & $\mathtt{1}$ & $\mathtt{1}$ & \quad{}$5:$\enskip{} & $\triangleright$ & $\mathtt{1}$ & $\mathtt{0}$ & $\mathtt{1}$ & \quad{}$7:$\enskip{} & $\triangleright$ & $\mathtt{1}$ & $\mathtt{1}$ & $\mathtt{1}$ & \quad{}$9:$\enskip{} & $\triangleright$ & $\mathtt{1}$ & $\mathtt{0}$ & $\mathtt{0}$ & $\mathtt{1}$\tabularnewline
\cline{2-3} \cline{5-7} \cline{9-12} \cline{14-17} \cline{19-23} 
\end{tabular}
\par\end{centering}
\caption{\label{fig1}Mirrored base-2 Turing tape representation of $0,..,9\in\mathbb{N}_{0}$,
blank cells omitted. Notice that the mirrored notation causes perpetual
change close to the start cell, $\triangleright$.}
\vspace{-16pt}
\end{figure}

\noindent Initiated with its read-write head on the cell with the
start symbol $\triangleright$ of a tape with content $n$, the machine
will go through a number of configurations before returning the read-write
head to the start cell with the tape now having content $n+1$. Auto-iterating,
the machine will thus, over time, produce a tape that will have contained
every natural number in order. 

This Turing machine may be emulated by a finite $\Sigma{\scriptstyle \Gamma}$
on a set $X$ cf. Lemma~\ref{lem:Any-Turing}. Omitting the details\footnote{The details differ depending on whether $\Sigma{\scriptstyle \Gamma}$
must be static, but non-Boolean for Prop. \ref{prop:NonBool-NoPost},
or Boolean, but non-static for Prop. \ref{prop:Bool-Post}. See resp.
\cite{KleinRendsvigTuring} and \cite{BolanderBirkegaard2011}.}, the idea is that the Turing tape, or a finite fragment, thereof
may be encoded as a pointed Kripke model: Each cell of the tape corresponds
to a state, with the cell's content encoded by additional structure,\footnote{\label{fn:The-exact-form}For Prop. \ref{prop:Bool-Post}, tape cell
content may be encoded using atomic propositions, changable through
postconditions, cf. \cite{BolanderBirkegaard2011}; for Prop. \ref{prop:NonBool-NoPost},
cell content is written by adding and removing additional states,
cf. \cite{KleinRendsvigTuring}.} which is modally expressible. By structuring the cell states with
two equivalence relations and atoms $u$ and $e$ true at cells with
odd (even) index respectively, (cf. Fig. \ref{fig:tape}), also the
position of a cell is expressible. The designated state corresponds
to the start cell, marked~$\triangleright$.

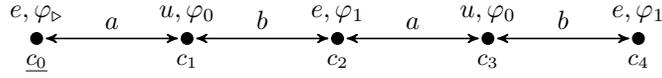
\begin{figure}
\begin{centering}
\scalebox{1}{
\begin{tikzpicture}[->,>=stealth',shorten >=1pt,shorten <=1pt,auto,node distance=2.8cm,semithick,label distance=0cm]

\tikzstyle{every state}=[draw=black,fill=black,inner sep=0,minimum size=.15cm]

    \node[state,,label=above:{$e,\varphi_{\triangleright}$},label=below:{\small $\textcolor{black}{\underline{c_{0}}}$}] at (-1,0) (0) {};
    \node[state,,label=above:{$u,\varphi_{0}$},label=below:{\small $\textcolor{black}{c_{1}}$}] at (1,0) (1) {};
    \node[state,,label=above:{$e,\varphi_{1}$},label=below:{\small $\textcolor{black}{c_{2}}$}] at (3,0) (2) {};
    \node[state,,label=above:{$u,\varphi_{0}$},label=below:{\small $\textcolor{black}{c_{3}}$}] at (5,0) (3) {};
 \node[state,,label=above:{$e,\varphi_{1}$},label=below:{\small $\textcolor{black}{c_{4}}$}] at (7,0) (4) {};
% \node at (9,0) (5) {};

\path (0) edge[<->, draw=black] node [midway] {$a$}%]
 (1);
 \path (1) edge[<->, draw=black]node [midway] {$b$}(2);
 \path (2) edge[<->, draw=black] node [midway] {$a$}(3);
\path (3) edge[<->, draw=black] node [midway] {$b$}(4);
%  \path (4) edge[-,dotted, draw=black](5);

\end{tikzpicture}
}
\par\end{centering}
\caption{{\small{}\label{fig:tape}}A pointed Kripke model emulating the configuration
of the Turing machine with cell content representing the number 10.
The designated state is the underlined $c_{0}$. Each state is labeled
with a formula $\varphi_{\triangleright},\ \varphi_{0}$ or $\varphi_{1}$
expressing its content. Relations $a$ and $b$ allow expressing distance
of cells: That $c_{0}$ satisfies $\lozenge_{a}(u\wedge\lozenge_{b}(e\wedge\varphi_{1}))$
exactly expresses that cell $c_{2}$ contains a 1. Omitted are reflexive
loops for relations, and the additional structure marking cell content
and read-write head position.}
\vspace{-16pt}
\end{figure}
\begin{comment}
\begin{proof}
\begin{figure}
\begin{centering}
\includegraphics[scale=0.08,bb = 0 0 200 100, draft, type=eps]{Fig2tmp.jpg}
\par\end{centering}
\caption{\label{fig2} A (partial) pointed Kripke model emulating the configuration
of the Turing machine when having summed to $8$. The designated state
is the underlined $s_{0}$. Each state is labeled with a formula expressing
its symbol. Relations $a$ and $b$ allow expressing distance of cells:
That $s_{0}$ satisfies $\lozenge_{a}\lozenge_{b}\varphi_{0}$ captures
that exactly cell 2 contains a $0$. Omitted are reflexive loops for
relations and all structure for symbols and read-write head position.}
\vspace{-16pt}
\end{figure}
\end{proof}

\end{comment}

Let $(c_{n})_{n\in\mathbb{N}_{0}}$ be the sequence of configurations
of the machine when initiated on a tape with content 0. Each $c_{n}$
may be represented by a pointed Kripke model, obtaining a sequence
$(x_{n})_{n\in\mathbb{N}_{0}}$. By Lemma \ref{lem:Any-Turing}, there
thus exists a $\Sigma{\scriptstyle \Gamma}$ such that for all $n$,
$x_{n}\otimes\Sigma{\scriptstyle \Gamma}=x_{n+1}$. Hence, moving
to the full modal space $\boldsymbol{X}$ for the language used, a
clean map $\boldsymbol{f}\colon\boldsymbol{X}\rightarrow\boldsymbol{X}$
based on $\Sigma{\scriptstyle \Gamma}$ will satisfy $\boldsymbol{f}(\boldsymbol{x}_{n})=\boldsymbol{x}_{n+1}$
for all $n$. The Turing machine's run is thus emulated by $(\boldsymbol{f}^{k}(\boldsymbol{x}_{0}))_{k\in\mathbb{N}_{0}}$.

Let $(c'_{n})_{n\in\mathbb{N}_{0}}$ be the subsequence of $(c_{n})_{n\in\mathbb{N}_{0}}$
where the machine has finished the successor operation and returned
its read write head to its starting position $\triangleright$, ready
to embark on the next successor step. The tape of the first 9 of these
$c_{n}'$ are depicted in Fig. \ref{fig1}. Let $(\boldsymbol{x}'_{n})_{n\in\mathbb{N}_{0}}$
be the corresponding subsequence of $(\boldsymbol{f}^{k}(\boldsymbol{x}_{0}))_{k\in\mathbb{N}_{0}}$.
We show that $(\boldsymbol{x}'_{n})_{n\in\mathbb{N}_{0}}$ has uncountably
many limit points:

For each subset $Z$ of $\mathbb{N}$, let $c^{Z}$ be a tape with
content $1$ on cell $i$ iff $i\in Z$ and $0$ else. On the Kripke
model side, let the corresponding $x^{Z}\in X$ be a model structurally
identical to those of $(x'_{n})_{n\in\mathbb{N}_{0}}$, but satisfying
$\varphi_{1}$ on all ``cell states'' distance $i\in Z$ from the
designated ``$\triangleright$'' state, and $\varphi_{0}$ on all
other.\footnote{The exact form is straightforward from the constructions used in \cite{KleinRendsvigTuring}
and \cite{BolanderBirkegaard2011}.} The set $\{x^{Z}\colon Z\subseteq\mathbb{N}\}$ is uncountable, and
each $\boldsymbol{x^{Z}}$ is a limit point of $\overline{\boldsymbol{x}}$:
For each $Z\subseteq\mathbb{N}$ and $n\in\mathbb{N}$, there are
infinitely many $k$ for which $x_{k}\vDash\varphi$ iff $x^{Z}\vDash\varphi$
for all $\varphi$ of modal depth at most $n$. Hence, for every $n$,
the set $\{\boldsymbol{x}_{k}:d_{b}(\boldsymbol{x}_{k},\boldsymbol{x^{Z}})<2^{-n}\}$
is infinite, with $d_{b}$ the equivalent of the $n$-bisimulation
metric, cf. Prop. \ref{prop:nBisim-finite}. Hence, for each of the
uncountably many $Z\subseteq\mathbb{N}$, $\boldsymbol{x^{Z}}$ is
a limit point of the sequence~$\boldsymbol{\overline{x}}$.

Finally, every $\boldsymbol{x}'_{k}\in(\boldsymbol{x}'_{n})_{n\in\mathbb{N}_{0}}$
is recurrent: That $\boldsymbol{x}'_{k}\in\omega_{\boldsymbol{f}}(\boldsymbol{x}'_{k})$
follows from $\boldsymbol{x}'_{k}$ being a limit point of $(\boldsymbol{x}'_{n})_{n\in\mathbb{N}_{0}}$,
which it is as $\boldsymbol{x}'_{k}=\boldsymbol{x}^{Z}$ for some
$Z\subseteq\mathbb{N}$.\footnote{A similar argument shows that all $x^{Z}$ with $Z\subseteq\mathbb{N}$
co-infinite are recurrent points. Hence $\omega_{f}(\boldsymbol{x}'_{k})$
for any $\boldsymbol{x}'_{k}\in(\boldsymbol{x}'_{n})_{n\in\mathbb{N}_{0}}$
contains uncountably many recurrent points.} As the set of recurrent points is thus infinite, it cannot be periodic. \qed
\end{proof}

As a final result on the orbits of clean maps, we answer an open question:
After having exemplified a period 2 system, Sadzik \cite{Sadzik2006}
notes that it is unknown whether finite, static, but non-Boolean,
clean maps exhibiting longer periods exist. They do:
\begin{proposition}
\label{prop:nPeriod}For any $n\in\mathbb{N}$, there exists finite,
static, but non-Boolean clean maps with periodic orbits of period
$n$. This is also true for finite Boolean, but non-static, clean
maps.
\end{proposition}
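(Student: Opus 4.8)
The plan is to reuse the Turing-machine emulation of Lemma~\ref{lem:Any-Turing} exactly as in the proof of Propositions~\ref{prop:NonBool-NoPost} and~\ref{prop:Bool-Post}, but to feed it a machine whose run is \emph{periodic} rather than the aperiodic successor machine. Concretely, for each $n\in\mathbb{N}$ I would construct a Turing machine $T_n$ whose configuration sequence $c_0,c_1,\dots$ satisfies $c_{k+n}=c_k$ for all $k$ with $c_0,\dots,c_{n-1}$ pairwise distinct, so that its run has period exactly $n$. Lemma~\ref{lem:Any-Turing} then supplies a finite multi-pointed action model $\Sigma{\scriptstyle\Gamma}$ emulating $T_n$ on a set $X$ of $S5$ models encoding the configurations, and by part~1 (resp.\ part~2) of the lemma $\Sigma{\scriptstyle\Gamma}$ may be chosen static but non-Boolean (resp.\ Boolean but non-static). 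As in the earlier proof the encoding satisfies $x_k\otimes\Sigma{\scriptstyle\Gamma}=x_{k+1}$, so the induced clean map $\boldsymbol f$ obeys $\boldsymbol f(\boldsymbol x_k)=\boldsymbol x_{k+1}$ and its orbit $\mathcal O_{\boldsymbol f}(\boldsymbol x_0)$ is periodic of period $n$.

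For $T_n$ I would use the simplest periodic behaviour: a control with $n$ states $q_0,\dots,q_{n-1}$ that, reading the start marker $\triangleright$, rewrites it, keeps the head in place, and moves from $q_i$ to $q_{(i+1)\bmod n}$. Since the tape and head position never change, the configuration after step $k$ differs from that after step $k'$ only in the control state; hence $c_0,\dots,c_{n-1}$ are pairwise distinct and $c_{k+n}=c_k$, so the run has period exactly $n$. The only delicate point is that the machine must revisit an identical configuration: in a Turing model admitting a \emph{stay} move this is immediate and immaterial to the emulation, so one configuration step of $T_n$ is one application of $\boldsymbol f$ and the orbit visits exactly $n$ configurations before repeating; without stay moves one instead lets the head perform a fixed closed walk of length $n$ that returns to its start with the tape unchanged and the $n$ visited configurations pairwise distinct, which is equally routine.

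It remains to verify the clean-map conditions and that the period does not collapse under the modal quotient. Finiteness of $X$ and of $\Sigma{\scriptstyle\Gamma}$, together with determinism and exhaustivity over $X$, are delivered by Lemma~\ref{lem:Any-Turing}; precondition finiteness is immediate since $\left\llbracket\Sigma\right\rrbracket$ is finite; and $\Sigma{\scriptstyle\Gamma}$ is closing over $X$ because $x_k\otimes\Sigma{\scriptstyle\Gamma}=x_{(k+1)\bmod n}\in X$. The key check is that $\boldsymbol x_0,\dots,\boldsymbol x_{n-1}$ are genuinely distinct points of the modal space $\boldsymbol X$, i.e.\ that the encodings of the distinct configurations $c_0,\dots,c_{n-1}$ are pairwise modally inequivalent; this is exactly the faithfulness already relied upon when reading off configuration data by modal formulas (as with $\varphi_\triangleright,\varphi_0,\varphi_1$ and the distance formulas of Fig.~\ref{fig:tape}), so no two of the $n$ points are identified and the least period equals $n$.

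I expect the main obstacle to be precisely this last point — guaranteeing that passing to $\boldsymbol{\mathcal L}_\Lambda$ modal equivalence does not merge two configuration encodings and thereby shrink the observed period to a proper divisor of $n$ — rather than the comparatively routine construction of the periodic machine; once faithfulness of the encoding is in hand, exactness of the period follows. As a transparent special case settling the Boolean, non-static half without the Turing machinery, I would alternatively exhibit $\Sigma{\scriptstyle\Gamma}$ directly as an $n$-state counter: take atoms $p_0,\dots,p_{n-1}$, let $x_i$ be a reflexive one-state model with $p_i$ true and every other $p_j$ false, and for each $i$ an action $\sigma_i$ (reflexive for every agent) with Boolean precondition $p_i\wedge\bigwedge_{j\neq i}\neg p_j$ and postcondition the conjunctive clause $p_{(i+1)\bmod n}\wedge\bigwedge_{j\neq (i+1)\bmod n}\neg p_j$; this $\Sigma{\scriptstyle\Gamma}$ is finite, Boolean, non-static, deterministic, exhaustive and closing over $X=\{x_0,\dots,x_{n-1}\}$, and cycles the $n$ propositionally distinct points with period exactly $n$.
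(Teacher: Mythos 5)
Your proposal is correct and follows essentially the same route as the paper's own (very terse) proof: for the given $n$, construct a Turing machine that loops with period $n$ and invoke Lemma~\ref{lem:Any-Turing} to obtain the clean map in either the static non-Boolean or Boolean non-static variant. Your additional care about faithfulness of the encoding under modal equivalence, and your direct $n$-state counter for the Boolean case, are sensible elaborations of details the paper leaves implicit, but they do not change the underlying argument.
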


\begin{proof}
For the given $n$, find a Turing machine that, from some configuration,
loops with period $n$. From here, Lemma \ref{lem:Any-Turing} does
the job.\qed
\end{proof}

Finally, we note that brute force determination of a clean map's orbit
properties is not in general a feasible option:
\begin{proposition}
The problems of determining whether a Boolean and non-static, or a
static and non-Boolean, clean map, a) has a periodic orbit or not,
and b) contains a limit cycle or not, are both undecidable.
\end{proposition}

\begin{proof}
The constructions from the proofs of Lemma \ref{lem:Any-Turing} allows
encoding the halting problem into either question.\qed
\end{proof}

\section{\label{sec:Concl}Discussion and Future Venues}

We consider Theorem \ref{thm:DynSys} our main contribution. With
it, an interface between the discrete semantics of dynamic epistemic
logic with dynamical systems have been provided; thus the former has
been situated in the mathematical field of the latter. This paves
the way for the application of results from dynamical systems theory
and related fields to the information dynamics of dynamic epistemic
logic.\smallskip{}

The term \emph{nontrivial recurrence} is adopted from Hasselblatt
and Katok, \cite{Hassellblatt2002}. They remark that ``{[}nontrivial
recurrence{]} is the first indication of complicated asymptotic behavior.''
Propositions \ref{prop:Bool-Post} and \ref{prop:NonBool-NoPost}
indicate that the dynamics of action models and product update may
not be an easy landscape to map. Hasselblatt and Katok continue: ``In
certain low-dimensional situations {[}...{]} it is possible to give
a comprehensive description of the nontrivial recurrence that can
appear.'' \cite[p. 24]{Hassellblatt2002}. That the Stone topology
is zero-dimensional fuels the hope that general topology and dynamical
systems theory yet has perspectives to offer on dynamic epistemic
logic. One possible direction is seeking a finer parametrization of
clean maps%
\begin{comment}
, e.g. akin to the frame-based analysis of \cite{Sadzik2006}
\end{comment}
{} combined with results specific to zero-dimensional spaces, as found,
e.g., in the field of symbolic dynamics \cite{Lind1995}. But also
other venues are possible: The introduction of \cite{Hassellblatt2002}
is counts an inspiration.\smallskip{}

The approach presented furthermore applies to model transformations
beyond multi-pointed action models and product update. Given the equivalence
shown in \cite{KooiRenne2011} between single-pointed action model
product update and \emph{general arrow updates}, we see no reason
to suspect that ``clean maps'' based on the latter should not be
continuous on modal spaces. A further conjecture is that the \emph{action-priority
update} of \cite{Baltag_Smets_2008} on plausibility models\footnote{Hence also the multi-agent belief revision policies \emph{lexicographic
upgrade} and \emph{elite change}, also known as \emph{radical} and
\emph{conservative upgrade}, introduced in \cite{Benthem2007b}, cf.
\cite{Baltag_Smets_2008}. } yields ``clean maps'' continuous w.r.t. the suited Stone topology,
and that this may be shown using a variant of our proof of the continuity
of clean maps. A more difficult case is the \emph{PDL-transformations}
of \emph{General Dynamic Dynamic Logic} \cite{Girard2012_GDDL} given
the signature change the operation involves. \smallskip{}

There is a possible clinch between the suggested approach and epistemic
logic with common knowledge. The state space of a dynamical system
is compact. The Stone topology for languages including a common knowledge
operator is non-compact. Hence, it cannot constitute the space of
a dynamical system\textemdash but its \emph{one-point compactification}
may. We are currently working on this clinch, the consequences of
compactification, and relations to the problem of attaining common
knowledge, cf. \cite{HalpernMoses1990}.\smallskip{}

Questions also arise concerning the \emph{dynamic logic }of dynamic
epistemic logic. Propositions \ref{prop:NonBool-NoPost} and \ref{prop:Bool-Post}
indicate that there is more to the semantic dynamics of dynamic epistemic
logic than is representable by finite compositional dynamic modalities\textemdash even
when including a Kleene star. An open question still stands on how
to reason about limit behavior. One interesting venue stems from van
Benthem \cite{Benthem2011b}. He notes\footnote{In the omitted part of the quotation from the introduction.}
that the reduction axioms of dynamic epistemic logic could possibly
be viewed on par with differential equations of quantitative dynamical
systems. As modal spaces are zero-dimensional, they are imbeddable
in $\mathbb{R}$ cf. \cite[Thm 50.5]{Munkres}, turning clean maps
into functions from $\mathbb{R}$ to $\mathbb{R}$, possibly representable
as discrete-time difference equations. \smallskip{}

An alternative approach is possible given by consulting Theorem \ref{thm:DynSys}.
With Theorem \ref{thm:DynSys}, a connection arises between dynamic
epistemic logic and \emph{dynamic topological logic} (see e.g. \cite{Fernandez-Duque2012,Fernandez-Duque2012a,Kremer1997,KremerMints2007}):
Each system $(\boldsymbol{X}_{d},\boldsymbol{f})$ may be considered
a dynamic topological model with atom set $\boldsymbol{\mathcal{L}}_{\Lambda}$
and the `next' operator's semantics given by an application of $\boldsymbol{f}$,
equivalent to a $\langle\boldsymbol{f}\rangle$ dynamic modality of
DEL. The topological `interior' operator has yet no DEL parallel.
A `henceforth' operator allows for a limited characterization of
recurrence \cite{KremerMints2007}. We are wondering about and wandering
around the connections between a limit set operator with semantics
$x\vDash[\omega_{\boldsymbol{f}}]\varphi$ iff $y\vDash\varphi$ for
all $\boldsymbol{y}\in\omega_{\boldsymbol{f}}(\boldsymbol{x})$, dynamic
topological logic and the study of oscillations suggested by van Benthem
\cite{Benthem-DS-2016}.\smallskip{}

With the focal point on pointed Kripke models and action model transformations,
we have only considered a special case of logical dynamics. It is
our firm belief that much of the methodology here suggested is generalizable:
With structures described logically using a countable language, the
notion of logical convergence will coincide with topological convergence
in the Stone topology on the quotient space \emph{modulo} logical
equivalence, and the metrics introduced will, \emph{mutatis mutandis},
be applicable to said space \cite{BigTech}. The continuity of maps
and compactness of course depends on what the specifics of the chosen
model transformations and the compactness of the logic amount to.

\subsubsection*{Acknowledgements}

{\small{}The contribution of R.K. Rendsvig was funded by the Swedish
Research Council through the Knowledge in a Digital World project
and by The Center for Information and Bubble Studies, sponsored by
The Carlsberg Foundation. The contribution of D. Klein was partially
supported by the Deutsche Forschungsgemeinschaft (DFG) and Grantov\'{a}
agentura \v{C}esk\'{e} republiky (GA\v{C}R) as part of the joint project
From Shared Evidence to Group Attitudes {[}RO 4548/6-1{]}.}{\small \par}

\bibliographystyle{abbrv}

\begin{thebibliography}{10}

\bibitem{aucher2010generalizing}
G.~Aucher.
\newblock {Generalizing AGM to a multi-agent setting}.
\newblock {\em Logic Journal of IGPL}, 18(4):530--558, 2010.

\bibitem{BaltagMoss2004}
A.~Baltag and L.~S. Moss.
\newblock {Logics for Epistemic Programs}.
\newblock {\em Synthese}, 139(2):165--224, 2004.

\bibitem{BaltagBMS_1998}
A.~Baltag, L.~S. Moss, and S.~Solecki.
\newblock {The Logic of Public Announcements, Common Knowledge, and Private
  Suspicions}.
\newblock In {\em TARK 1998}. M.Kaufmann, 1998.

\bibitem{BaltagRenneReview}
A.~Baltag and B.~Renne.
\newblock Dynamic epistemic logic.
\newblock In {\em The Stanford Encyclopedia of Philosophy}. Fall 2016 edition,
  2016.

\bibitem{Baltag_Smets_2008}
A.~Baltag and S.~Smets.
\newblock {A Qualitative Theory of Dynamic Interactive Belief Revision}.
\newblock In {\em Proc. LOFT 7}. Amsterdam Univ. Press, 2008.

\bibitem{Baltag2009}
A.~Baltag and S.~Smets.
\newblock {Group Belief Dynamics Under Iterated Revision: Fixed Points and
  Cycles of Joint Upgrades}.
\newblock In {\em TARK'09}. ACM, 2009.

\bibitem{Benthem2001}
J.~{\VAN{Benthem}{van}}~Benthem.
\newblock {Games in Dynamic-Epistemic Logic}.
\newblock {\em Bulletin of Economic Research}, 53(1):219--249, 2001.

\bibitem{Benthem_OneLonely}
J.~{\VAN{Benthem}{van}}~Benthem.
\newblock {``One is a Lonely Number'': Logic and Communication}.
\newblock In {\em Logic Colloquium '02}, Lecture Notes in Logic, 27.
  Association for Symbolic Logic, 2006.

\bibitem{Benthem2007b}
J.~{\VAN{Benthem}{van}}~Benthem.
\newblock {Dynamic logic for belief revision}.
\newblock {\em Journal of Applied Non-Classical Logics}, 17(2):129--155, 2007.

\bibitem{Benthem2011b}
J.~{\VAN{Benthem}{van}}~Benthem.
\newblock {\em {Logical Dynamics of Information and Interaction}}.
\newblock Cambridge University Press, 2011.

\bibitem{Benthem-DS-2016}
J.~{\VAN{Benthem}{van}}~Benthem.
\newblock {Oscillations, Logic, and Dynamical Systems}.
\newblock In {\em The Facts Matter}. College Publications, 2016.

\bibitem{Benthem2006_com-change}
J.~{\VAN{Benthem}{van}}~Benthem, J.~{\VAN{Eijck}{van}}~Eijck, and B.~Kooi.
\newblock {Logics of communication and change}.
\newblock {\em Information and Computation}, 204(11):1620--1662, 2006.

\bibitem{Benthem_etal_2009}
J.~{\VAN{Benthem}{van}}~Benthem, J.~Gerbrandy, T.~Hoshi, and E.~Pacuit.
\newblock {Merging Frameworks for Interaction}.
\newblock {\em Journal of Philosophical Logic}, 38(5):491--526, 2009.

\bibitem{BlueModalLogic2001}
P.~Blackburn, M.~{\VAN{Rijke}{de}}~Rijke, and Y.~Venema.
\newblock {\em {Modal Logic}}.
\newblock Camb. Univ. Press, 2001.

\bibitem{BolanderBirkegaard2011}
T.~Bolander and M.~Birkegaard.
\newblock {Epistemic planning for single- and multi-agent systems}.
\newblock {\em Journal of Applied Non-Classical Logics}, 21(1):9--34, 2011.

\bibitem{Bolander2015}
T.~Bolander, M.~Jensen, and F.~Schwarzentruber.
\newblock Complexity results in epistemic planning.
\newblock In {\em Proc. IJCAI 2015}. AAAI Press, 2015.

\bibitem{Caridroit2016}
T.~Caridroit, S.~Konieczny, T.~de~Lima, and P.~Marquis.
\newblock {On Distances Between KD45n Kripke Models and Their Use for Belief
  Revision}.
\newblock In {\em ECAI 2016}. IOS Press, 2016.

\bibitem{Degremont2010}
C.~D\'{e}gremont.
\newblock {\em {The Temporal Mind: Observations on the logic of belief change
  in interactive systems}}.
\newblock PhD thesis, University of Amsterdam, 2010.

\bibitem{Ditmarsch_Kooi_ontic}
H.~{\VAN{Ditmarsch}{van}}~Ditmarsch and B.~Kooi.
\newblock {Semantic Results for Ontic and Epistemic Change}.
\newblock In {\em Logic and the Foundations of Game and Decision Theory (LOFT
  7)}, Texts in Logic and Games, Vol. 3. Amsterdam University Press, 2008.

\bibitem{Ditmarsch2008}
H.~{\VAN{Ditmarsch}{van}}~Ditmarsch, W.~{\VAN{Wiebe}{van der}}~Hoek, and
  B.~Kooi.
\newblock {\em {Dynamic Epistemic Logic}}.
\newblock Springer, 2008.

\bibitem{Eisner2015}
T.~Eisner, B.~Farkas, M.~Haase, and R.~Nagel.
\newblock {\em Operator Theoretic Aspects of Ergodic Theory}.
\newblock Springer International Publishing, 2015.

\bibitem{Fagin_etal_1995}
R.~Fagin, J.~Y. Halpern, Y.~Moses, and M.~Y. Vardi.
\newblock {\em {Reasoning About Knowledge}}.
\newblock The MIT Press, 1995.

\bibitem{Fernandez-Duque2012}
D.~Fern{\'{a}}ndez-Duque.
\newblock {A sound and complete axiomatization for Dynamic Topological Logic}.
\newblock {\em Journal of Symbolic Logic}, 77(3):1--26, 2012.

\bibitem{Fernandez-Duque2012a}
D.~Fern{\'{a}}ndez-Duque.
\newblock {Dynamic topological logic of metric spaces}.
\newblock {\em Journal of Symbolic Logic}, 77(1):308--328, 2012.

\bibitem{Girard2012_GDDL}
P.~Girard, J.~Seligman, and F.~Liu.
\newblock {General Dynamic Dynamic Logic}.
\newblock In {\em Advances in modal logics, vol 9.} College Publications, 2012.

\bibitem{Goranko2004}
V.~Goranko.
\newblock {Logical Topologies and Semantic Completeness}.
\newblock In {\em Logic Colloquium'99}. Lecture Notes in Logic 17, AK Peters,
  2004.

\bibitem{ModelTheoryModalLogic}
V.~Goranko and M.~Otto.
\newblock {Model Theory of Modal Logic}.
\newblock In {\em {Handbook of Modal Logic}}. Elsevier, 2008.

\bibitem{HalpernMoses1990}
Y.~J. Halpern and Y.~Moses.
\newblock {Knowledge and Common Knowledge in a Distributed Environment}.
\newblock {\em Journal of the ACM}, 37(3):549--587, 1990.

\bibitem{Hassellblatt2002}
B.~Hasselblatt and A.~Katok.
\newblock Principal structures.
\newblock In {\em Handbook of Dynamical Systems, Vol. 1A}. Elsevier, 2002.

\bibitem{Hintikka1962}
J.~Hintikka.
\newblock {\em {Knowledge and Belief: An Introduction to the Logic of the Two
  Notions}}.
\newblock College Publications, 2nd, 2005 edition, 1962.

\bibitem{BigTech}
D.~Klein and R.~K. Rendsvig.
\newblock {Metrics for Formal Structures, with an Application to Kripke Models and their Dynamics}.
\newblock {\em arXiv:1704.00977}, 2017.

\bibitem{KleinRendsvigTuring}
D.~Klein and R.~K. Rendsvig.
\newblock {Turing Completeness of Finite, Epistemic Programs}.
\newblock {\em arXiv:1706.06845}, 2017.

\bibitem{KooiRenne2011}
B.~Kooi and B.~Renne.
\newblock {Generalized Arrow Update Logic}.
\newblock In {\em TARK 2011}, New York, NY, USA, 2011. ACM.

\bibitem{Kremer1997}
P.~Kremer and G.~Mints.
\newblock {Dynamical Topological Logic}.
\newblock {\em Bulleting of Symbolic Logic}, 3:371--372, 1997.

\bibitem{KremerMints2007}
P.~Kremer and G.~Mints.
\newblock {Dynamic Topological Logic}.
\newblock In {\em Handbook of Spatial Logics}. Springer, 2007.

\bibitem{Lind1995}
D.~Lind and B.~Marcus.
\newblock {\em {An Introduction to Symbolic Dynamics and Coding}}.
\newblock Cambridge University Press, 1995.

\bibitem{Munkres}
J.~R. Munkres.
\newblock {\em Topology}.
\newblock Prentice-Hall, 2nd edition, 2000.

\bibitem{Plaza1989}
J.~A. Plaza.
\newblock {Logics of public communications}.
\newblock In {\em {Proceedings of the 4th International Symposium on
  Methodologies for Intelligent Systems}}, 1989.

\bibitem{Rendsvig-MA-2011}
R.~K. Rendsvig.
\newblock {Towards a Theory of Semantic Competence}.
\newblock Master's thesis, Dept. of Philosophy and Science Studies and Dept. of
  Mathematics, Roskilde University, 2011.

\bibitem{Rendsvig2014}
R.~K. Rendsvig.
\newblock {Diffusion, Influence and Best-Response Dynamics in Networks: An
  Action Model Approach}.
\newblock In {\em Proc. ESSLLI 2014 Stud.} 
\newblock {\em arXiv:1708.01477}, 2014.

\bibitem{Rendsvig_BS}
R.~K. Rendsvig.
\newblock {Pluralistic ignorance in the bystander effect: Informational
  dynamics of unresponsive witnesses in situations calling for intervention}.
\newblock {\em Synthese}, 191(11):2471--2498, 2014.

\bibitem{Rendsvig-DS-DEL-2015}
R.~K. Rendsvig.
\newblock {Model Transformers for Dynamical Systems of Dynamic Epistemic
  Logic}.
\newblock In {\em Logic, Rationality, and Interaction}, LNCS. Springer, 2015.

\bibitem{Sadzik2006}
T.~Sadzik.
\newblock {Exploring the Iterated Update Universe}.
\newblock {\em ILLC PP-2006-26}, 2006.

\bibitem{Vries2014}
J.~{\VAN{Vries}{de}}~Vries.
\newblock {\em {Topological Dynamical Systems}}.
\newblock de Gruyter, 2014.

\end{thebibliography}

\end{document}